\newtheorem{pro}{Proposition}
\newtheorem{cor}{Corollary}
\newtheorem{lem}{Lemma}
\newtheorem{defi}{Definition}
\newtheorem{rem}{Remark}
\newtheorem{thm}{Theorem}
\theoremstyle{nonumberplain}
\newtheorem{proof}{Proof}
\renewcommand{\arraystretch}{1.5}
\begin{document}
\title{Local Measurement and Reconstruction for Noisy Graph Signals}
\author{Xiaohan~Wang, Jiaxuan~Chen, and Yuantao~Gu%
\thanks{The authors are with Department of Electronic Engineering, Tsinghua University, Beijing 100084, P. R. China. The corresponding author of this work is Yuantao Gu (e-mail: gyt@tsinghua.edu.cn).}}

\date{Submitted April 5, 2015}

\maketitle

\begin{abstract}
The emerging field of signal processing on graph plays a more and more important role in processing signals and information related to networks. Existing works have shown that under certain conditions a smooth graph signal can be uniquely reconstructed from its decimation, i.e., data associated with a subset of vertices. However, in some potential applications (e.g., sensor networks with clustering structure), the obtained data may be a combination of signals associated with several vertices, rather than the decimation. In this paper, we propose a new concept of local measurement, which is a generalization of decimation. Using the local measurements, a local-set-based method named iterative local measurement reconstruction (ILMR) is proposed to reconstruct bandlimited graph signals. It is proved that ILMR can reconstruct the original signal perfectly under certain conditions. The performance of ILMR against noise is theoretically analyzed. The optimal choice of local weights and a greedy algorithm of local set partition are given in the sense of minimizing the expected reconstruction error. Compared with decimation, the proposed local measurement sampling and reconstruction scheme is more robust in noise existing scenarios.

\textbf{Keywords:} Signal processing on graph, graph signal, sampling, local measurement, iterative reconstruction.
\end{abstract}

\section{Introduction}
In recent years, graph-based signal processing has become an active research field due to the increasing demands for signal and information processing in irregular domains \cite{shuman_emerging_2013,sandryhaila_big_2014}.
For an $N$-vertex undirected graph $\mathcal{G}(\mathcal{V}, \mathcal{E})$, where $\mathcal{V}$ denotes the vertex set and $\mathcal{E}$ denotes the edge set, if a real number is associated with each vertex of $\mathcal{G}$, these numbers of all the vertices constitute a graph signal ${\bf f}\in\mathbb{R}^N$.
Potential applications of graph signal processing have been found in areas including sensor networks \cite{zhu_graph_2012}, semi-supervised learning \cite{gadde_active_2014}, image processing \cite{yang_gesture_2014}, and structure monitoring \cite{chen_bridge_2013}.

A lot of concepts and techniques for classical signal processing are extended to graph signal processing. Related problems on graph include graph signal filtering \cite{sandryhaila_discrete_2013, chen_adaptive_2013}, graph wavelets \cite{Coifman_Diffusion_2006, hammond_wavelets_2011,narang_perfect_2012}, graph signal compression \cite{zhu_approximating_2012, nguyen_downsampling_2015}, uncertainty principle \cite{agaskar_aspectral_2013}, graph signal coarsening \cite{liu_coarsening_2014, liu_graphcoarsening_2014}, multiresolution transforms \cite{ekambaram_multiresolution_2013,shuman_aframework_2013}, parametric dictionary learning \cite{thanou_parametric_2013}, graph topology learning \cite{dong_learning_2014}, graph signal sampling and reconstruction \cite{anis_towards_2014, narang_localized_2013, wang_iterative_2014, wang_local_2014}, and distributed algorithms \cite{wang_distributed_2014, chen_distributed_2015}.

\subsection{Motivation and Related Works}
It is a natural problem to reconstruct smooth signals from partial observations on a graph in practical applications \cite{sandryhaila_discrete_2013, chen_adaptive_2013, narang_signal_2013}. For data gathering in sensor networks, sometimes only part of the nodes transmit data due to the limited bandwidth or energy. According to the smoothness of data, the missing entries can be estimated from the received ones, which can be modeled as the reconstruction of smooth signals on graph from decimation. Especially, for a sensor network with clustering structure, the collected data within a cluster are aggregated by the cluster head, which plays the role as a local measurement and can be naturally obtained. Using the measured data from all the clusters to retrieve the raw data of all the nodes can be modeled as a problem of smooth graph signal reconstruction from local measurements, which is a linear combination of the signal amplitudes in a cluster of vertices. This problem is studied in this work for the first time.

There have been several works focusing on the theory of exactly reconstructing a bandlimited graph signal from its decimation.
Sufficient conditions for unique reconstruction of bandlimited graph signals from decimation are given for normalized \cite{pesenson_sampling_2008} and unnormalized Laplacian \cite{fuhr_poincare_2013}.
In \cite{anis_towards_2014}, a necessary and sufficient condition on the cutoff frequency is established and the bandwidth is estimated based on the concept of spectral moments.
Several algorithms are proposed to reconstruct graph signals from decimation.
In \cite{narang_localized_2013}, an algorithm named iterative least square reconstruction (ILSR) is proposed and the tradeoff between data-fitting and smoothness is also considered.
Two more efficient algorithms named iterative weighting reconstruction (IWR) and iterative propagating reconstruction (IPR) are proposed in \cite{wang_local_2014} with much faster convergence.

As far as we know, there is no work on reconstructing graph signals from local measurements.
The idea of local measurements can be traced back to
time-domain nonuniform sampling \cite{marvasti_nonuniform_2001}, or irregular sampling \cite{feichtinger_theory_1994, grochenig_adiscrete_1993}, which has a close relationship with graph signal sampling and reconstruction.
For the signals in time-domain \cite{grochenig_reconstruction_1992, feichtinger_theory_1994}, shift-invariant space \cite{aldroubi_nonuniform_2002}, or on manifolds \cite{pesenson_poincare_2004,feichtinger_recovery_2004}, based on the theoretical results of signal reconstruction from samples, there has been extended works on reconstructing signals from local averages. However, there is no such work on graph-signal-related problems.

\subsection{Contributions}
In this paper, we first generalize the sampling scheme for graph signals from \emph{decimation} to \emph{local measurement}.
Based on this scheme, we then propose a new method named iterative local measurement reconstruction (ILMR) to reconstruct the original signal from limited measurements. It is proved that the bandlimited signals can always be exactly reconstructed from its local measurements if certain conditions are satisfied. Moreover, we demonstrate that the traditional decimation scheme, which samples by vertex, and its corresponding reconstruction method is a special case of this work. Based on the performance analysis of ILMR, we find that the local measurement is more robust than decimation in noise scenario. As a consequence, the optimal local weights in different noisy environment are discussed. The proposed sampling scheme has several advantages. First, it will benefit in the situation where local measurements are easier to obtain than the samples of specific vertices. Second, the proposed local measurement and reconstruction is more robust against noise.


This paper is organized as follows. In section II, the basis of graph signal processing and some existing algorithms for reconstructing graph signals from decimation are reviewed. The generalized sampling scheme, i.e. local measurement, is proposed in section III. In section IV, the reconstruction algorithm ILMR is proposed and its convergence is proved. In section V, the reconstruction performance in noise scenario is studied, and the optimal choice of local weight and local set partition is discussed. Experimental results are demonstrated in section VI, and the paper is concluded in section VII.

\section{Preliminaries}
\subsection{Laplacian-based Graph Signal Processing and Bandlimited Graph Signals}

The Laplacian \cite{chung_spectral_1997} of an $N$-vertex undirected graph $\mathcal{G}$ is defined as
$$
{\bf L=D-A},
$$
where ${\bf A}$ is the adjacency matrix of $\mathcal{G}$, and ${\bf D}$ is the degree matrix, which is a diagonal matrix whose entries are the degrees of the corresponding vertices.

Since $\mathcal{G}$ is undirected, its Laplacian is a symmetric and positive semi-definite matrix, and all of the eigenvalues of ${\bf L}$ are real and nonnegative. If $\mathcal{G}$ is connected, there is only one zero eigenvalue. Denote the eigenvalues of ${\bf L}$ as
$0=\lambda_1<\lambda_2\le\cdots\le\lambda_N$, and the corresponding eigenvectors as $\{{\bf u}_k\}_{1\le k\le N}$. The eigenvectors can also be regarded as graph signals on $\mathcal{G}$.

The Laplacian ${\bf L}: \mathbb{R}^N\rightarrow\mathbb{R}^N$ is an operator on the space of graph signals on $\mathcal{G}$,
$$
({\bf Lf})(u)=\sum_{v\in\mathcal{V}, u\sim v}\!\!\left(f(u)-f(v)\right),\quad \forall u\in\mathcal{V},
$$
where $f(u)$ denotes the entry of ${\bf f}$ associated with vertex $u$, and $u\sim v$ denotes that there is an edge between vertices $u$ and $v$. The Laplacian can be viewed as a kind of differential operator between vertices and their neighbors. Therefore, among the eigenvectors of ${\bf L}$, those associated with small eigenvalues have similar amplitudes on connected vertices, while the eigenvectors associated with large eigenvalues vary fast on the graph. In other words, eigenvectors associated with small eigenvalues are smooth or denote low-frequency components of signals on $\mathcal{G}$.

For graph Fourier transform \cite{hammond_wavelets_2011}, the eigenvectors $\{{\bf u}_k\}_{1\le k\le N}$ are regarded as the Fourier basis of the frequency-domain, and the eigenvalues $\{\lambda_k\}_{1\le k\le N}$ are regarded as frequencies.
The graph Fourier transform is
$$
\hat{f}(k)=\langle {\bf f}, {\bf u}_k\rangle=\sum_{i=1}^Nf(i)u_k(i),
$$
where $\hat{f}(k)$ is the strength of frequency $\lambda_k$.

Similar to its counterpart in time-domain, if a graph signal ${\bf f}$ is smooth on $\mathcal{G}$, ${\bf f}$ may be uniquely determined by its entries on a limited number of sampled vertices. Based on the graph Laplacian, the smoothness of a graph signal is usually described as being within a bandlimited subspace. A graph signal ${\mathbf f}\in\mathbb{R}^N$ is $\omega$-bandlimited if
$${\mathbf f}\in PW_{\omega}(\mathcal{G})\triangleq\text{span}\{{\bf u}_i|\lambda_i\le\omega\},$$
which is called Paley-Wiener space on $\mathcal{G}$ \cite{pesenson_sampling_2008}.

\subsection{Reconstruction from Decimation of Bandlimited Graph Signals}

There have been theoretical analysis and algorithms on the reconstruction from decimation of bandlimited graph signals.
Existing results show that ${\mathbf f}\in PW_{\omega}(\mathcal{G})$ can be uniquely reconstructed from its entries $\{f(u)\}_{u\in \mathcal{S}}$ on a sampling vertex set $\mathcal{S}\subseteq \mathcal{V}$ under certain conditions.
An important concept of \emph{uniqueness set} is introduced in \cite{pesenson_sampling_2008}.

\begin{defi}[uniqueness set {\cite{pesenson_sampling_2008}}]
 A set of vertices $\mathcal{S}\subseteq \mathcal{V}(\mathcal{G})$ is a uniqueness set for space $PW_{\omega}(\mathcal{G})$ if it holds for all ${\mathbf f}, {\mathbf g} \in PW_{\omega}(\mathcal{G})$ that $f(u)$ equals $g(u)$ for all $u\in\mathcal{S}$ implies ${\mathbf f}$ equals ${\mathbf g}$.
\end{defi}

Then iterative least square reconstruction (ILSR), a reconstruction algorithm from decimation of graph signal is proposed, which can be written in the following equivalent form.
\begin{thm}[ILSR {\cite{narang_localized_2013}}]\label{thm:ILSR}
If the sampling set $\mathcal{S}$ is a uniqueness set for $PW_{\omega}(\mathcal{G})$, then the original signal ${\bf f}$ can be reconstructed using the decimation $\{f(u)\}_{u\in\mathcal{S}}$
by the following ILSR method,
\begin{align}
{\bf f}^{(0)}&=\mathcal{P}_{\omega}\left(\sum_{u\in \mathcal{S}}f(u)\bm{\delta}_{u}\right),\nonumber\\
{\mathbf f}^{(k+1)}&={\mathbf f}^{(k)}+\mathcal{P}_{\omega}\left(\sum_{u\in \mathcal{S}}(f(u)-f^{(k)}(u))\bm{\delta}_{u}\right)\nonumber,
\end{align}
where $\mathcal{P}_{\omega}(\cdot)$ is the projection operator onto $PW_{\omega}(\mathcal{G})$, and $\bm{\delta}_u$ is a Dirac delta function whose entries satisfying
\begin{equation}\label{defdelta}
\delta_u(v)=
\begin{cases}
1, & v=u; \\
0, & v\neq u.
\end{cases}
\end{equation}
\end{thm}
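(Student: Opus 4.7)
The plan is to reframe the iteration as a fixed-point iteration driven by a single linear operator, and then use the uniqueness-set hypothesis to show this operator is a strict contraction on $PW_{\omega}(\mathcal{G})$. Define the operator $T:\mathbb{R}^N\to PW_{\omega}(\mathcal{G})$ by
\begin{equation*}
T({\bf g})=\mathcal{P}_{\omega}\!\left(\sum_{u\in\mathcal{S}}g(u)\bm{\delta}_u\right).
\end{equation*}
With this notation, the update rule reads ${\bf f}^{(k+1)}={\bf f}^{(k)}+T({\bf f}-{\bf f}^{(k)})$, so the error ${\bf e}^{(k)}:={\bf f}-{\bf f}^{(k)}$ satisfies ${\bf e}^{(k+1)}=(I-T){\bf e}^{(k)}$. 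A trivial induction gives ${\bf f}^{(k)}\in PW_{\omega}(\mathcal{G})$ for every $k$, and since ${\bf f}\in PW_{\omega}(\mathcal{G})$ by hypothesis, the error ${\bf e}^{(k)}$ lives entirely in $PW_{\omega}(\mathcal{G})$. Thus it suffices to show that the restriction $(I-T)|_{PW_{\omega}(\mathcal{G})}$ is a strict contraction.

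Next, I would analyze $T$ on $PW_{\omega}(\mathcal{G})$. For any ${\bf g},{\bf h}\in PW_{\omega}(\mathcal{G})$, since $\mathcal{P}_{\omega}$ is an orthogonal projection and ${\bf h}$ already lies in its range,
\begin{equation*}
\langle T({\bf g}),{\bf h}\rangle=\Big\langle\sum_{u\in\mathcal{S}}g(u)\bm{\delta}_u,{\bf h}\Big\rangle=\sum_{u\in\mathcal{S}}g(u)h(u),
\end{equation*}
which is symmetric in ${\bf g}$ and ${\bf h}$. Hence $T|_{PW_{\omega}(\mathcal{G})}$ is self-adjoint. Setting ${\bf h}={\bf g}$ yields $\langle T{\bf g},{\bf g}\rangle=\sum_{u\in\mathcal{S}}g(u)^2$, which is nonnegative and bounded above by $\|{\bf g}\|^2$. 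So every eigenvalue $\mu$ of $T|_{PW_{\omega}(\mathcal{G})}$ lies in $[0,1]$.

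The crucial step, where the uniqueness-set hypothesis enters, is ruling out the eigenvalue $\mu=0$. If $T{\bf g}=0$ for some ${\bf g}\in PW_{\omega}(\mathcal{G})$, then $\sum_{u\in\mathcal{S}}g(u)^2=0$, so $g(u)=0$ for all $u\in\mathcal{S}$. Comparing with the zero signal (also in $PW_{\omega}(\mathcal{G})$) via Definition 1 forces ${\bf g}=0$. Thus all eigenvalues of $T|_{PW_{\omega}(\mathcal{G})}$ are strictly positive, so those of $(I-T)|_{PW_{\omega}(\mathcal{G})}$ lie in $[0,1)$ and its operator norm is some $\rho<1$.

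Finally I would conclude $\|{\bf e}^{(k)}\|\le\rho^{\,k}\|{\bf e}^{(0)}\|\to 0$, so ${\bf f}^{(k)}\to{\bf f}$. The main obstacle is isolating exactly where the uniqueness-set hypothesis is needed; once one recognizes that the natural quadratic form $\langle T{\bf g},{\bf g}\rangle=\sum_{u\in\mathcal{S}}g(u)^2$ is precisely the obstruction to injectivity on $PW_{\omega}(\mathcal{G})$, the rest of the argument is linear-algebraic. A minor technical point worth checking is that the zero eigenvalue of $I-T$ (which can occur when some bandlimited signal is supported inside $\mathcal{S}$) does not obstruct convergence, since it still yields $|1-\mu|<1$; what must be avoided is $\mu=0$ for $T$, not $\mu=0$ for $I-T$.
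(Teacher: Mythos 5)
Your argument is correct. A preliminary remark: the paper gives no proof of Theorem \ref{thm:ILSR} at all---it is imported verbatim from \cite{narang_localized_2013}---so the only in-paper arguments to compare against are the proofs of the analogous results for ILMR, namely Lemma \ref{lemma1} and Proposition \ref{pro1}. Those proceed by an explicit Poincar\'{e}-type estimate, $\|{\bf f}-{\bf Gf}\|\le C_{\rm max}\sqrt{\omega}\,\|{\bf f}\|$, obtained by chaining $|f(v)-f(p)|$ along paths inside each local set and comparing the resulting edge sum with ${\bf f}^{\rm T}{\bf Lf}\le\omega\|{\bf f}\|^2$; this yields a quantitative contraction factor $\gamma=C_{\rm max}\sqrt{\omega}$ but only under the stronger hypothesis $\omega<1/C_{\rm max}^2$. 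Your route is genuinely different and better matched to the hypothesis actually stated in Theorem \ref{thm:ILSR}: you note that $T$ restricted to $PW_{\omega}(\mathcal{G})$ is self-adjoint and positive semidefinite with quadratic form $\sum_{u\in\mathcal{S}}g(u)^2\le\|{\bf g}\|^2$, and that the uniqueness-set condition is precisely positive definiteness of this form on $PW_{\omega}(\mathcal{G})$, whence $\mu_{\min}>0$ and $\|(I-T)|_{PW_{\omega}(\mathcal{G})}\|=1-\mu_{\min}=\rho<1$. What your approach buys is sharpness---it needs only the uniqueness-set hypothesis, with no geometric condition on $\omega$; what it gives up is constructiveness, since $\rho$ comes from a spectral-gap/compactness argument and carries no computable rate, unlike the paper's $\gamma$. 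One step worth making explicit: passing from ``every eigenvalue of $T|_{PW_{\omega}(\mathcal{G})}$ is strictly positive'' to ``$\mu_{\min}>0$'' uses that $PW_{\omega}(\mathcal{G})\subseteq\mathbb{R}^N$ is finite-dimensional; injectivity alone would not yield a uniform gap in an infinite-dimensional Paley--Wiener space, which is exactly why the classical irregular-sampling literature must work harder at this point.
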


To accelerate the convergence, an algorithm named iterative propagating reconstruction (IPR) is proposed, which is based on an important concept of \emph{local sets}.

\begin{defi}[local sets {\cite{wang_local_2014}}]\label{deflocalset1}
For a sampling set $\mathcal{S}$ on a graph $\mathcal{G}(\mathcal{V},\mathcal{E})$, assume that $\mathcal{V}$ is divided into disjoint local sets $\{\mathcal{N}(u)\}_{u\in \mathcal{S}}$ associated with the sampled vertices. For each $u\in\mathcal{S}$, denote the subgraph of $\mathcal{G}$ restricted to $\mathcal{N}(u)$ by $\mathcal{G}_{\mathcal{N}(u)}$, which is composed of vertices in $\mathcal{N}(u)$ and edges between them in $\mathcal{E}$. For each $u\in\mathcal{S}$, its local set satisfies $u\in\mathcal{N}(u)$, and the subgraph $\mathcal{G}_{\mathcal{N}(u)}$ is connected.  Besides, $\{\mathcal{N}(u)\}_{u\in \mathcal{S}}$ should satisfy
$$
\bigcup_{u\in \mathcal{S}} \mathcal{N}(u)=\mathcal{V}
\text{   and   }
\mathcal{N}(u)\cap \mathcal{N}(v)=\emptyset, \quad \forall u, v\in\mathcal{S}, u\neq v.
$$
\end{defi}

The property of a local set is measured by \emph{maximal multiple number} and \emph{radius}, as follows.
\begin{defi}[maximal multiple number {\cite{wang_local_2014}}]\label{defimmn}
Denoting $\mathcal{T}(u)$ as the shortest-path tree of $\mathcal{G}_{\mathcal{N}(u)}$ rooted at $u$,
for $v\sim u$ in $\mathcal{T}(u)$, $\mathcal{T}_u(v)$ is the subtree composed by $v$ and its descendants in $\mathcal{T}(u)$.
The \emph{maximal multiple number} of $\mathcal{N}(u)$ is
$$
K(u)=\max_{v\sim u \text{ in } \mathcal{T}(u)}|\mathcal{T}_u(v)|.
$$
\end{defi}

\begin{defi}[radius {\cite{wang_local_2014}}]\label{defradius}
The \emph{radius} of $\mathcal{N}(u)$ is the maximal distance of vertex in $\mathcal{G}_{\mathcal{N}(u)}$ from $u$, denoted as
$$
R(u)=\max_{v\in \mathcal{N}(u)}\text{dist}(v,u).
$$
\end{defi}

\begin{thm}[IPR {\cite{wang_local_2014}}]\label{thm:IPR}
For a given sampling set $\mathcal S$ and associated local sets $\{\mathcal{N}(u)\}_{u\in\mathcal{S}}$ on a graph $\mathcal{G}(\mathcal{V},\mathcal{E})$, $\forall {\mathbf f}\in PW_{\omega}(\mathcal{G})$, if $\omega$ is less than $1/Q_{\rm max}^2$, ${\mathbf f}$ can be reconstructed by its decimation $\{f(u)\}_{u\in \mathcal{S}}$ through the IPR method
\begin{align}
{\mathbf f}^{(0)}&=\mathcal{P}_{\omega}\left(\sum_{u\in \mathcal{S}}f(u)\bm{\delta}_{\mathcal{N}(u)}\right),\nonumber\\
{\mathbf f}^{(k+1)}&={\mathbf f}^{(k)}+\mathcal{P}_{\omega}\left(\sum_{u\in \mathcal{S}}(f(u)-f^{(k)}(u))\bm{\delta}_{\mathcal{N}(u)}\right),\nonumber
\end{align}
where $$Q_{\rm max}=\max_{u\in\mathcal{S}}\sqrt{K(u)R(u)},$$
and $\bm{\delta}_{\mathcal{N}(u)}$ denotes the graph signal with entries
$$
\delta_{\mathcal{N}(u)}(v)=
\begin{cases}
1, & v\in \mathcal{N}(u);\\
0, & v\notin \mathcal{N}(u).
\end{cases}
$$
\end{thm}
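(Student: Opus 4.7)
The plan is to recast the iteration as a fixed-point scheme for the error and show that the associated operator is a contraction on $PW_\omega(\mathcal{G})$. Define the linear map $\mathcal{A}:\mathbb{R}^N\to PW_\omega(\mathcal{G})$ by $\mathcal{A}({\bf g})=\mathcal{P}_\omega\!\bigl(\sum_{u\in\mathcal{S}} g(u)\bm{\delta}_{\mathcal{N}(u)}\bigr)$ and set ${\bf e}^{(k)}={\bf f}-{\bf f}^{(k)}$. A direct subtraction gives ${\bf e}^{(k+1)}=(I-\mathcal{A}){\bf e}^{(k)}$, and since ${\bf f}^{(k)}\in PW_\omega(\mathcal{G})$ by construction, each ${\bf e}^{(k)}$ also lies in $PW_\omega(\mathcal{G})$. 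Thus it suffices to prove that $\|I-\mathcal{A}\|_{\mathrm{op}}<1$ on $PW_\omega(\mathcal{G})$; the theorem then follows by iterating the contraction bound and sending $k\to\infty$.

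For ${\bf g}\in PW_\omega(\mathcal{G})$, since $\mathcal{P}_\omega$ fixes ${\bf g}$ and is a contraction, I would first estimate
\[
\|{\bf g}-\mathcal{A}({\bf g})\|^2=\Bigl\|\mathcal{P}_\omega\!\Bigl({\bf g}-\sum_{u\in\mathcal{S}} g(u)\bm{\delta}_{\mathcal{N}(u)}\Bigr)\Bigr\|^2\le \sum_{u\in\mathcal{S}}\sum_{v\in\mathcal{N}(u)}|g(v)-g(u)|^2,
\]
using the disjoint-union property of the local sets to evaluate the $\ell^2$-norm coordinatewise. The task then reduces to controlling the right-hand side by a multiple of $\omega\|{\bf g}\|^2$.

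The key step is a local Poincaré-type inequality on each $\mathcal{N}(u)$: writing the unique $u$-to-$v$ path in the shortest-path tree $\mathcal{T}(u)$ and using Cauchy–Schwarz along its at most $R(u)$ edges, then observing that each tree edge $\{p,q\}$ appears in the paths of at most $K(u)$ descendants of $q$, one obtains
\[
\sum_{v\in\mathcal{N}(u)}|g(v)-g(u)|^2\le K(u)R(u)\!\!\sum_{\{p,q\}\in\mathcal{E},\,p,q\in\mathcal{N}(u)}\!\!|g(p)-g(q)|^2.
\]
Summing over $u\in\mathcal{S}$, using that the edge sets of the subgraphs $\mathcal{G}_{\mathcal{N}(u)}$ are disjoint subsets of $\mathcal{E}$, and recalling $\langle{\bf Lg},{\bf g}\rangle=\sum_{\{p,q\}\in\mathcal{E}}|g(p)-g(q)|^2$, gives
\[
\|{\bf g}-\mathcal{A}({\bf g})\|^2\le Q_{\max}^2\,\langle{\bf Lg},{\bf g}\rangle.
\]
Finally, the bandlimited hypothesis ${\bf g}\in PW_\omega(\mathcal{G})$ yields $\langle{\bf Lg},{\bf g}\rangle\le\omega\|{\bf g}\|^2$ by expanding in the eigenbasis, so $\|I-\mathcal{A}\|_{\mathrm{op}}\le Q_{\max}\sqrt{\omega}<1$ under the hypothesis $\omega<1/Q_{\max}^2$, completing the contraction argument.

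The main obstacle is the combinatorial Poincaré inequality on a single local set, where the precise counting of how many times each tree edge is traversed must match the definition of $K(u)$ in Definition~\ref{defimmn}; everything else (projection nonexpansiveness, disjointness of $\{\mathcal{N}(u)\}$, the spectral bound on $\langle{\bf Lg},{\bf g}\rangle$) is routine once that inequality is in hand. I would also briefly verify that the initialization ${\bf f}^{(0)}=\mathcal{A}({\bf f})$ fits the same recursion by treating it as the first application of $I-\mathcal{A}$ to the fictitious iterate ${\bf f}^{(-1)}=0$, so that the geometric decay $\|{\bf e}^{(k)}\|\le(Q_{\max}\sqrt{\omega})^{k+1}\|{\bf f}\|$ is uniform from the start.
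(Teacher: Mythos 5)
Your proof is correct: the error recursion, the reduction to bounding $\|I-\mathcal{A}\|$ on $PW_\omega(\mathcal{G})$, the local Poincar\'e inequality via Cauchy--Schwarz along tree paths (with $R(u)$ bounding the path length and $K(u)$ bounding the reuse count of each tree edge), and the spectral bound $\langle {\bf Lg},{\bf g}\rangle\le\omega\|{\bf g}\|^2$ are all sound. The paper itself only cites this theorem from \cite{wang_local_2014}, but its appendix proof of the analogous Lemma \ref{lemma1} (with $|\mathcal{N}_i|D_i$ in place of $K(u)R(u)$) and Proposition \ref{pro1} follows exactly the same contraction-plus-local-Poincar\'e architecture, so your argument is essentially the paper's approach specialized to the rooted, tree-based constants.
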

IPR converges faster than ILSR, because in each iteration, IPR updates a larger increment than ILSR by utilizing a propagation to local sets.

\section{Local Measurement: A Generalized Sampling Scheme}

We consider a new sampling scheme of measuring by local sets. In this scheme, all the vertices in a graph is partitioned into disjoint clusters. In each cluster, there is no specific sampling vertex, but all vertices in this cluster contribute to produce a measurement.     For this purpose, \emph{centerless local sets} are first introduced based on Definition \ref{deflocalset1}.

\begin{defi}[centerless local sets]\label{deflocalset2}
For a graph $\mathcal{G}(\mathcal{V},\mathcal{E})$, assume that $\mathcal{V}$ is divided into disjoint local sets $\{\mathcal{N}_i\}_{i\in\mathcal{I}}$, where $\mathcal I$ denotes the index set of divisions. Each subgraph $\mathcal{G}_{\mathcal{N}_i}$, which denotes the subgraph of $\mathcal{G}$ restricted to $\mathcal{N}_i$, is connected.  Besides, $\{\mathcal{N}_i\}_{i\in \mathcal{I}}$ should satisfy
$$
\bigcup_{i\in \mathcal{I}} \mathcal{N}_i=\mathcal{V}
\text{    and    }
\mathcal{N}_i\cap \mathcal{N}_j=\emptyset, \quad \forall i, j\in\mathcal{I}, ~ i\neq j.
$$
\end{defi}

One should notice that the centerless local sets play important roles in the proposed generalized sampling scheme, while the local sets do not in traditional decimation scheme. In the decimation scheme, the local sets are designed for specific reconstruction algorithms and have no effect in the sampling process. However, in the generalized sampling scheme, the centerless local sets are elaborated for sampling and determine the performance of reconstruction, which will be discussed in section \ref{secnoise}.

To evaluate the partition of a graph, the \emph{diameter} of a centerless local set is defined and will be utilized in next section.
\begin{defi}[diameter]\label{diameter}
For a centerless local set $\mathcal{N}_i$, its diameter is defined as the largest distance of two vertices in $\mathcal{G}_{\mathcal{N}_i}$, i.e.,
$$
D_i = \max_{u,v\in\mathcal{N}_i}{\text{dist}}(u,v).
$$
\end{defi}

In order to produce a measurement from specific centerless local set, a \emph{local weight} is defined to balance the contribution of all vertices in this set and to obstruct the energy from other part of the graph.

\begin{defi}[local weight]
A local weight $\bm{\varphi}_i\in \mathbb{R}^N$ associated with a centerless local set $\mathcal{N}_i$ satisfies
$$
\varphi_i(v)
\begin{cases}
\ge 0, v\in \mathcal{N}_i\\
= 0, v\notin \mathcal{N}_i
\end{cases}
$$
and
$$
\sum_{v\in \mathcal{N}_i}\varphi_i(v)=1.
$$
\end{defi}

We highlight that the weight is \emph{local} rather than \emph{global} comes from some natural observations. It is partially because that locality and local operations are basic features of graphs and complex networks. Moreover, signal processing on graph may be dependent on distributed implementation, where local operations are more feasible than global ones.

Finally, we arrive at the definition of \emph{local measurement} by linearly combining the signal amplitudes in each centerless local set using preassigned local weights.

\begin{defi}[local measurement]
For given centerless local sets and the associated local weights $\{(\mathcal{N}_i,\bm{\varphi}_i)\}_{i\in \mathcal{I}}$, a set of local measurements for a graph signal $\bf f$ is $\{f_{\bm{\varphi}_i}\}_{i\in \mathcal{I}}$, where
$$
f_{\bm{\varphi}_i}\triangleq\langle {\mathbf f}, \bm{\varphi}_i\rangle=\sum_{v\in \mathcal{N}_i}f(v)\varphi_i(v).
$$
\end{defi}

\begin{figure}[t]
\begin{center}
\includegraphics[width=11cm]{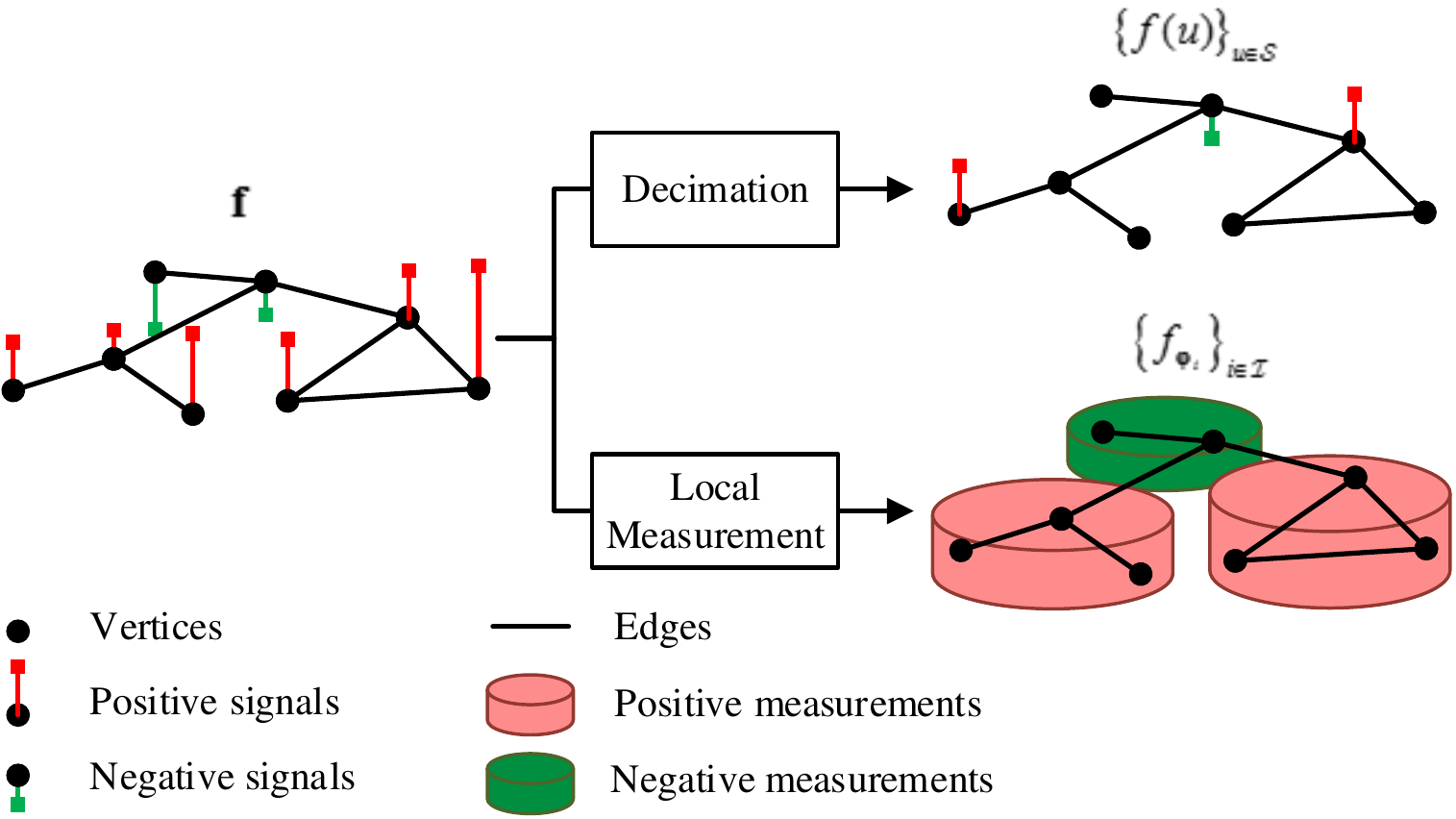}
\caption{An illustration of traditional sampling (decimation) scheme versus generalized sampling (local measurement) scheme. For each centerless local set, a local measurement is produced by a linear combination of signal amplitudes associated with vertices within this set.}
\label{decivsmeas}
\end{center}
\end{figure}

The sampling schemes of decimation and of local measurement are visualized in Fig. \ref{decivsmeas}. Compared with decimation in previous works \cite{wang_local_2014, pesenson_sampling_2008}, local measurement can be regarded as a generalized sampling scheme.
The local measurement $\{f_{\bm{\varphi}_i}\}_{i\in\mathcal{I}}$ is to obtain a linear combination of the signal in each local set, while the decimation $\{f(u)\}_{u\in\mathcal{S}}$ is to obtain the signal on selected vertices in the sampling set $\mathcal{S}$.
Both sampling schemes take the inner products of the original signal and specified local weights.
Decimation can be regarded as a special case of local measurement, with all the weights in each centerless local set assigned to only one vertex, i.e., the sampled vertex.

\section{ILMR: Reconstruct Signal from Local Measurements}
We will show that under certain conditions the original signal ${\mathbf f}$ can be uniquely and exactly reconstructed from the local measurements $\{f_{\bm{\varphi}_i}\}_{i\in\mathcal{I}}$.

First of all, an operator is defined based on centerless local sets and the associated local weights.

\begin{defi}\label{limitedpropagation}
For given centerless local sets and the associated weights $\{(\mathcal{N}_i, \bm{\varphi}_i)\}_{i\in\mathcal{I}}$ on a graph $\mathcal{G}(\mathcal{V},\mathcal{E})$, an operator ${\bf G}$ is defined by
\begin{align}
{\bf G}{\mathbf f}&=\mathcal{P}_{\omega}\left(\sum_{i\in \mathcal{I}}\langle {\mathbf f}, \bm{\varphi}_i\rangle\bm{\delta}_{\mathcal{N}_i}\right)\label{localmeasprop1}\\
&= \sum_{i\in \mathcal{I}}\langle {\mathbf f}, \bm{\varphi}_i\rangle\mathcal{P}_{\omega}(\bm{\delta}_{\mathcal{N}_i}),\label{localmeasprop2}
\end{align}
where $\bm{\delta}_{\mathcal{N}_i}$ is defined as
\begin{equation}
\delta_{\mathcal{N}_i}(v)=
\begin{cases}
1, & v\in \mathcal{N}_i;\\
0, & v\notin \mathcal{N}_i.
\end{cases}\label{deltaNi}
\end{equation}
\end{defi}

For a graph signal, the proposed operator is to calculate the local measurement in each centerless local set, then to assign the local measurement to all the vertices in that set, and finally to filter out the component beyond the bandwidth, i.e., \eqref{localmeasprop1}. Equivalently, it denotes a linear combination of all low-frequency part of $\{\bm{\delta}_{\mathcal{N}_i}\}_{i\in\mathcal{I}}$, with the combination coefficients as the local measurements of corresponding local sets, i.e., \eqref{localmeasprop2}.

The following lemma shows that the proposed operator is bounded in $PW_{\omega}(\mathcal{G})$ under certain conditions.

\begin{lem}\label{lemma1}
For given centerless local sets and the associated weights $\{(\mathcal{N}_i, \bm{\varphi}_i)\}_{i\in\mathcal{I}}$, $\forall{\mathbf f}\in PW_{\omega}(\mathcal{G})$, the following inequality holds,
$$
\|{\mathbf f}-{\bf G}{\mathbf f}\|\le C_{\rm max}\sqrt{\omega}\|{\mathbf f}\|,
$$
where
$$
C_{\rm max}=\max_{i\in\mathcal{I}}\sqrt{|\mathcal{N}_i|D_i},
$$
and $|\cdot|$ denotes cardinality.
\end{lem}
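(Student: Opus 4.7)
The plan is to bound $\|{\bf f}-{\bf G}{\bf f}\|^2$ by pulling the projection out (it is a contraction) and then controlling the pre-projection error one centerless local set at a time, ultimately converting the local pointwise differences into a quadratic form involving the Laplacian so we can invoke the bandlimited assumption $\mathbf{f}^\top \mathbf{L}\mathbf{f}\le\omega\|\mathbf{f}\|^2$.

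First I would use $\mathbf{f}\in PW_\omega(\mathcal{G})$ to write $\mathbf{f}=\mathcal{P}_\omega\mathbf{f}$ and hence
$$\mathbf{f}-\mathbf{G}\mathbf{f}=\mathcal{P}_\omega\!\left(\mathbf{f}-\sum_{i\in\mathcal{I}}\langle\mathbf{f},\bm{\varphi}_i\rangle\bm{\delta}_{\mathcal{N}_i}\right),$$
so by the contractivity of $\mathcal{P}_\omega$ we have $\|\mathbf{f}-\mathbf{G}\mathbf{f}\|^2\le\sum_i\sum_{v\in\mathcal{N}_i}(f(v)-\langle\mathbf{f},\bm{\varphi}_i\rangle)^2$, where the disjointness of the $\mathcal{N}_i$ and the fact that they cover $\mathcal{V}$ have been used to split the sum by local set. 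Since $\sum_{u\in\mathcal{N}_i}\varphi_i(u)=1$, for every $v\in\mathcal{N}_i$ I can rewrite $f(v)-\langle\mathbf{f},\bm{\varphi}_i\rangle=\sum_{u\in\mathcal{N}_i}\varphi_i(u)\bigl(f(v)-f(u)\bigr)$ and apply Jensen's inequality (valid because the $\varphi_i(u)$ are nonnegative and sum to one) to get
$$(f(v)-\langle\mathbf{f},\bm{\varphi}_i\rangle)^2\le\sum_{u\in\mathcal{N}_i}\varphi_i(u)\bigl(f(v)-f(u)\bigr)^2.$$

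Next, because $\mathcal{G}_{\mathcal{N}_i}$ is connected by Definition \ref{deflocalset2}, any $u,v\in\mathcal{N}_i$ are joined by a path in $\mathcal{G}_{\mathcal{N}_i}$ of length at most the diameter $D_i$; telescoping along this path and applying Cauchy--Schwarz yields
$$\bigl(f(v)-f(u)\bigr)^2\le D_i\sum_{(a,b)\in\mathcal{E}_{\mathcal{N}_i}}\bigl(f(a)-f(b)\bigr)^2,$$
since the path is contained in the edge set $\mathcal{E}_{\mathcal{N}_i}$ of the induced subgraph. Substituting back, carrying out the $v$-sum to produce a factor $|\mathcal{N}_i|$ and the $u$-sum with weights $\varphi_i(u)$ to produce a factor $1$, I arrive at
$$\|\mathbf{f}-\mathbf{G}\mathbf{f}\|^2\le\sum_{i\in\mathcal{I}}|\mathcal{N}_i|D_i\sum_{(a,b)\in\mathcal{E}_{\mathcal{N}_i}}\bigl(f(a)-f(b)\bigr)^2\le C_{\max}^2\sum_{i\in\mathcal{I}}\sum_{(a,b)\in\mathcal{E}_{\mathcal{N}_i}}\bigl(f(a)-f(b)\bigr)^2.$$

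Finally I would observe that the edge sets $\mathcal{E}_{\mathcal{N}_i}$ are pairwise disjoint subsets of $\mathcal{E}$ (the $\mathcal{N}_i$ partition $\mathcal{V}$), so the right-hand sum is bounded by $\sum_{(a,b)\in\mathcal{E}}(f(a)-f(b))^2=\mathbf{f}^\top\mathbf{L}\mathbf{f}\le\omega\|\mathbf{f}\|^2$ for $\mathbf{f}\in PW_\omega(\mathcal{G})$; taking square roots gives the claim. I expect the main obstacle to be choosing the correct combination of Jensen (for the weighted mean over $\mathcal{N}_i$) and path-telescoping (for the pointwise differences) so that the resulting factors collapse precisely to $|\mathcal{N}_i|D_i$ rather than a coarser constant, and verifying that discarding the between-set edges from $\mathbf{f}^\top\mathbf{L}\mathbf{f}$ is exactly what lets $C_{\max}$, defined set-by-set, enter the bound.
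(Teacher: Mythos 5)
Your proposal is correct and follows essentially the same route as the paper's proof: pull out the projection, decompose over the disjoint local sets, telescope along an in-set path of length at most $D_i$ with Cauchy--Schwarz, and bound the resulting edge sum by $\mathbf{f}^{\rm T}\mathbf{L}\mathbf{f}\le\omega\|\mathbf{f}\|^2$. The only cosmetic differences are that you use Jensen's inequality on the convex combination where the paper bounds it by the worst-case $\max_{p\in\mathcal{N}_i}|f(v)-f(p)|^2$, and you majorize each path sum by the full induced edge sum where the paper counts edge reuse; both yield the same constant $|\mathcal{N}_i|D_i$.
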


The proof of Lemma \ref{lemma1} is postponed to section \ref{proof1}. Lemma \ref{lemma1} shows that the operator $({\bf I-G})$ is a contraction mapping in $PW_{\omega}(\mathcal{G})$ if $\omega$ is less than $1/C_{\rm max}^2$.

Based on Lemma \ref{lemma1}, it is shown in Proposition \ref{pro1} that the original signal can be reconstructed from its local measurements.

\begin{pro}\label{pro1}
For given centerless local sets and the associated weights $\{(\mathcal{N}_i, \bm{\varphi}_i)\}_{i\in\mathcal{I}}$, $\forall {\mathbf f}\in PW_{\omega}(\mathcal{G})$, where $\omega$ is less than $1/C_{\rm max}^2$, ${\mathbf f}$ can be reconstructed from its local measurements $\{f_{\bm{\varphi}_i}\}_{i\in \mathcal{I}}$ through an iterative local measurement reconstruction (ILMR) algorithm in Table \ref{alg},
\begin{table}[t]
\renewcommand{\arraystretch}{1.2}
\caption{Iterative Local Measurement Reconstruction.}\label{alg}
\begin{center}
\begin{tabular}{l}
\toprule[1pt]
{\bf Input:} \hspace{0.5em} Graph $\mathcal{G}$, cutoff frequency $\omega$, centerless local sets $\{\mathcal{N}_i\}_{i\in\mathcal{I}}$, \\
\hspace{3.5em} local weights $\{\bm{\varphi}_i\}_{i\in\mathcal{I}}$, local measurements $\{f_{\bm{\varphi}_i}\}_{i\in\mathcal{I}}$;\\
{\bf Output:} \hspace{0.5em} Interpolated signal ${\bf f}^{(k)}$;\\
\hline
{\bf Initialization:}\\
\begin{minipage}{0.45\textwidth}
\begin{equation}\label{ILMR:init}
{\mathbf f}^{(0)}={\mathcal P}_{\omega}\left(\sum_{i\in \mathcal{I}}f_{\bm{\varphi}_i}\bm{\delta}_{\mathcal{N}_i}\right);
\end{equation}
\end{minipage}\\
{\bf Loop:}\\
\begin{minipage}{0.45\textwidth}
\begin{equation}\label{ILMR:loop}
{\mathbf f}^{(k+1)}={\mathbf f}^{(k)}+{\mathcal P}_{\omega}\left(\sum_{i\in \mathcal{I}}(f_{\bm{\varphi}_i}-\langle {\mathbf f}^{(k)}, \bm{\varphi}_i\rangle)\bm{\delta}_{\mathcal{N}_i}\right);
\end{equation}
\end{minipage}\\
{\bf Until:}\hspace{0.5em} The stop condition is satisfied.\\
\bottomrule[1pt]
\end{tabular}
\end{center}
\end{table}
with the error at the $k$th iteration satisfying
$$
\|{\mathbf f}^{(k)}-{\mathbf f}\|\le \gamma^{k}\|{\mathbf f}^{(0)}-{\bf f}\|,
$$
where
\begin{equation}\label{defgamma}
\gamma =C_{\rm max}\sqrt{\omega}.
\end{equation}
\end{pro}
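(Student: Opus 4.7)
The plan is to recast the ILMR iteration as a simple linear recursion for the error and then invoke Lemma \ref{lemma1} directly. Because $f_{\bm{\varphi}_i} = \langle \mathbf{f}, \bm{\varphi}_i \rangle$ by the very definition of a local measurement, the loop step \eqref{ILMR:loop} can be rewritten as
\begin{equation*}
\mathbf{f}^{(k+1)} = \mathbf{f}^{(k)} + \mathcal{P}_{\omega}\left(\sum_{i\in\mathcal{I}} \langle \mathbf{f}-\mathbf{f}^{(k)}, \bm{\varphi}_i\rangle\,\bm{\delta}_{\mathcal{N}_i}\right) = \mathbf{f}^{(k)} + \mathbf{G}(\mathbf{f}-\mathbf{f}^{(k)}),
\end{equation*}
and the initialization \eqref{ILMR:init} analogously reads $\mathbf{f}^{(0)} = \mathbf{G}\mathbf{f}$.

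Let $\mathbf{e}^{(k)} \triangleq \mathbf{f}-\mathbf{f}^{(k)}$. Subtracting the recursion above from $\mathbf{f}$, the first step I would carry out is to derive
\begin{equation*}
\mathbf{e}^{(k+1)} = (\mathbf{I}-\mathbf{G})\mathbf{e}^{(k)},
\end{equation*}
and to iterate this to obtain $\mathbf{e}^{(k)} = (\mathbf{I}-\mathbf{G})^{k}\mathbf{e}^{(0)}$.

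Next I would verify that the error lives in the Paley-Wiener subspace so that Lemma \ref{lemma1} applies. Since the output of $\mathcal{P}_\omega$ lies in $PW_{\omega}(\mathcal{G})$, an easy induction shows $\mathbf{f}^{(k)} \in PW_{\omega}(\mathcal{G})$ for every $k\ge 0$; combined with the hypothesis $\mathbf{f}\in PW_{\omega}(\mathcal{G})$, this gives $\mathbf{e}^{(k)}\in PW_{\omega}(\mathcal{G})$. Lemma \ref{lemma1} then yields
\begin{equation*}
\|\mathbf{e}^{(k+1)}\| = \|(\mathbf{I}-\mathbf{G})\mathbf{e}^{(k)}\| \le C_{\rm max}\sqrt{\omega}\,\|\mathbf{e}^{(k)}\| = \gamma\,\|\mathbf{e}^{(k)}\|,
\end{equation*}
and unrolling this inequality produces the advertised bound $\|\mathbf{e}^{(k)}\|\le \gamma^{k}\|\mathbf{e}^{(0)}\|$. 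Finally, the hypothesis $\omega < 1/C_{\rm max}^{2}$ gives $\gamma<1$, so $\mathbf{f}^{(k)}\to\mathbf{f}$, establishing that ILMR reconstructs $\mathbf{f}$ from $\{f_{\bm{\varphi}_i}\}_{i\in\mathcal{I}}$.

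There is no real obstacle here once Lemma \ref{lemma1} is in hand; the argument is essentially a Banach-style fixed-point contraction argument, with the main subtlety being the need to observe that every iterate stays inside $PW_{\omega}(\mathcal{G})$ so that the contraction estimate is legally applicable to $\mathbf{e}^{(k)}$. The real difficulty of the proposition is therefore absorbed into Lemma \ref{lemma1}, whose proof (deferred to Section \ref{proof1}) must justify the geometric dependence of $C_{\rm max}$ on $|\mathcal{N}_i|$ and $D_i$ via the structure of the centerless local sets.
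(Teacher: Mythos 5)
Your proposal is correct and follows essentially the same route as the paper: rewrite the loop step as ${\mathbf f}^{(k+1)}={\mathbf f}^{(k)}+{\bf G}({\mathbf f}-{\mathbf f}^{(k)})$, note that the error stays in $PW_{\omega}(\mathcal{G})$, and apply Lemma \ref{lemma1} to contract it by $\gamma$ at each step. The only difference is cosmetic — you unroll the recursion explicitly as $({\bf I}-{\bf G})^k{\bf e}^{(0)}$, which the paper leaves implicit.
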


\begin{proof}
According to the definition of ${\bf G}$, the iteration (\ref{ILMR:loop}) can be rewritten as
\begin{equation}\label{iter}
{\mathbf f}^{(k+1)}={\mathbf f}^{(k)}+{\bf G}({\mathbf f}-{\mathbf f}^{(k)}).
\end{equation}
Note that ${\mathbf f}\in PW_{\omega}(\mathcal{G})$ and ${\mathbf f}^{(k)}\in PW_{\omega}(\mathcal{G})$ for any $k$, then ${\mathbf f}^{(k)}-{\mathbf f}\in PW_{\omega}(\mathcal{G})$.
As a consequence of Lemma \ref{lemma1},
$$
\|{\mathbf f}^{(k+1)}-{\mathbf f}\|=\|({\mathbf f}^{(k)}-{\mathbf f})-{\bf G}({\mathbf f}^{(k)}-{\mathbf f})\|\le \gamma\|{\mathbf f}^{(k)}-{\mathbf f}\|.
$$
\end{proof}

Proposition \ref{pro1} shows that a signal ${\mathbf f}$ is uniquely determined and can be reconstructed by its local measurements $\{f_{\bm{\varphi}_i}\}_{i\in \mathcal{I}}$ if $\{\bm{\varphi}_i\}_{i\in \mathcal{I}}$ are known.
The quantity $(f_{\bm{\varphi}_i}-\langle {\mathbf f}^{(k)}, \bm{\varphi}_i\rangle)$ is the estimate error between the original measurement and the reconstructed measurement at the $k$th iteration.
According to (\ref{iter}), in each iteration of ILMR, the new increment of the interpolated signal is obtained by first assigning the estimate error to all vertices in the associated centerless local sets, and then projecting it onto the $\omega$-bandlimited subspace.

Except for the difference of decimation and local measurement, the basic idea of ILMR is similar to that of IPR \cite{wang_local_2014}, which is an algorithm of reconstructing graph signal from decimation. The procedures of IPR and ILMR in each iteration are illustrated in Fig. \ref{algorithm}.
One may find that in the assignment or propagating step, ILMR assigns the estimate errors of local measurements to vertices within the local sets, while IPR propagates the estimate errors of the decimated signal on the sampled vertices to other vertices in the local sets. In fact, ILMR degenerates to IPR if the local weight concentrates on only one vertex (the sampled vertex) in each local set, in which case the local measurement degenerates to decimation.

The sufficient conditions and error bounds for ILMR and IPR are also different. Suppose the (centerless) local sets divisions in ILMR and IPR are exactly the same, i.e. the sampling set $\mathcal{S}$ in IPR can be written as $\{u_i\}_{i\in\mathcal{I}}$, where $\mathcal{I}$ is the index set in ILMR, then $\mathcal{N}_i$ equals $\mathcal{N}(u_i)$ for all $i\in\mathcal{I}$. According to Definition \ref{defimmn} and \ref{defradius}, we have $R(u_i) \le D_i$ and $K(u_i) \le |\mathcal{N}(u_i)|=|\mathcal{N}_i|$. Therefore, $C_{\rm max}$ is not less than $Q_{\rm max}$. It implies that a more strict condition is needed to reconstruct a graph signal accurately from local measurements than to reconstruct it from decimation. However, since both sufficient conditions in Theorem \ref{thm:IPR} and Proposition \ref{pro1} are not tight and there is still room for refinement, such a comparison only provides a rough analysis.


\begin{figure*}[t]
\begin{center}
\includegraphics[width=16cm]{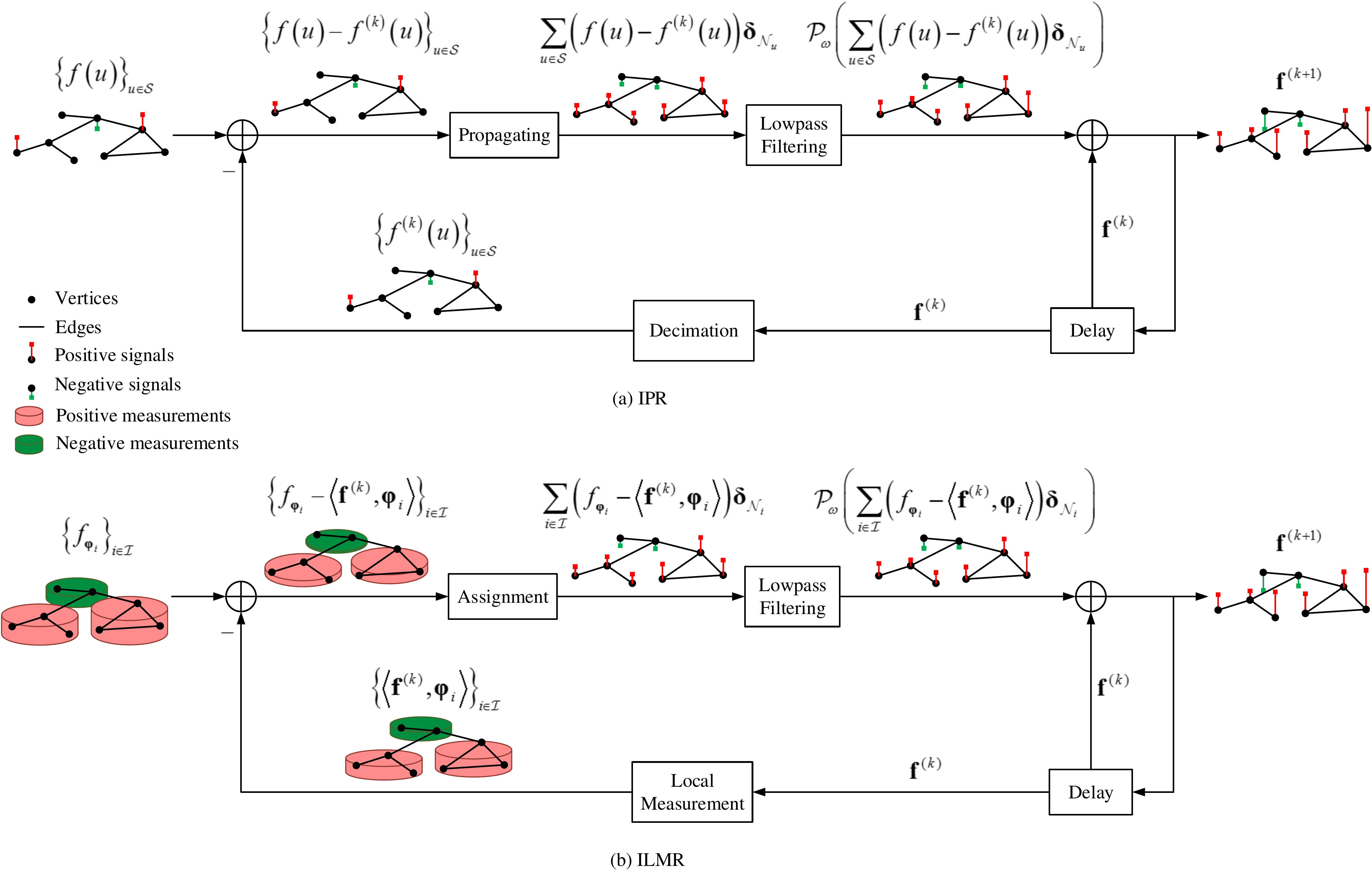}
\caption{The procedures of IPR and ILMR. The former algorithm is to reconstruct a bandlimited signal from decimation, while the latter reconstructs a signal from local measurements. Essentially, ILMR becomes IPR if the local weights concentrate on only one vertex of each local set, in which case local measurement degenerates to decimation. }
\label{algorithm}
\end{center}
\end{figure*}

\begin{rem}
For potential applications, if the local measurements come from the result of some repeatable physical operations, the local weights are even not necessarily known when conducting ILMR.
In detail, if $\{\bm{\varphi}_i\}_{i\in\mathcal{I}}$ is unknown but fixed, i.e., the local measurement operation in Fig. \ref{algorithm}(b) is a black box, $\langle {\mathbf f}^{(k)}, \bm{\varphi}_i\rangle$ may also be obtained by conducting the physical operations in each iteration. Therefore, the original signal can still be reconstructed by ILMR without exactly knowing $\{\bm{\varphi}_i\}_{i\in\mathcal{I}}$. This is a rather interesting result, and may facilitate graph signal reconstruction in specific scenarios.
\end{rem}

\section{Performance Analysis}\label{secnoise}

In this section, we study the error performance of ILMR when the original signal is corrupted by additive noise.  We  first derive the reconstruction error for incorrect measurement. Then the expected reconstruction error is calculated under the assumption of independent Gaussian noises and the optimal local weight is obtained in the sense of minimizing the expected reconstruction error bound. Finally, in a special case of \emph{i.i.d.} Gaussian perturbation, a greedy method for the centerless local sets partition and the selection of optimal local weights are provided.

\subsection{Reconstruction Error in Noise Scenario}

Suppose the observed signal associated with each vertex is corrupted by additive noise.
The corrupted signal is denoted as $\tilde{\bf f}={\bf f+n}$, where ${\bf n}$ denotes the noise.
In the $k$th iteration of ILMR, the corrupted local measurements $\{\langle \tilde{\mathbf f}, \bm{\varphi}_i\rangle\}_{i\in\mathcal{I}}$ is utilized to produce the temporary reconstruction of $\tilde{\bf f}^{(k)}$.

The following lemma gives a reconstruction error bound of $\tilde{\bf f}^{(k)}$.
\begin{pro}\label{pro2}
For given centerless local sets and the associated weights $\{(\mathcal{N}_i, \bm{\varphi}_i)\}_{i\in\mathcal{I}}$, ${\mathbf f}\in PW_{\omega}(\mathcal{G})$ is corrupted by additive noise ${\bf n}$.
If $\omega$ is less than $1/C_{\rm max}^2$, in the $k$th iteration the output of ILMR using the corrupted local measurements $\{\langle \tilde{\mathbf f}, \bm{\varphi}_i\rangle\}_{i\in\mathcal{I}}$ satisfies
\begin{align}\label{tfk-f}
\|\tilde{\bf f}^{(k)}-{\bf f}\|
\le\frac{\tilde{n}}{1-\gamma}+\gamma^{k+1}\left(\|{\mathbf f}\|+\|{\bf n}\|\right),
\end{align}
where $\gamma$ is defined as (\ref{defgamma}), $\tilde{n}$ is defined as
\begin{equation}\label{tilden}
\tilde{n}=\sum_{i\in \mathcal{I}}\sqrt{|\mathcal{N}_i|}\cdot|n_i|,
\end{equation}
and $n_i$ is the equivalent noise of centerless local set $\mathcal{N}_i$, defined as
\begin{equation}\label{ni}
n_i=\langle {\bf n}, \bm{\varphi}_i\rangle=\sum_{v\in\mathcal{N}_i}n(v)\varphi_i(v).
\end{equation}
\end{pro}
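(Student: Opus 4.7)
The plan is to set up a linear error recursion for $\mathbf{e}^{(k)} := \tilde{\mathbf{f}}^{(k)}-\mathbf{f}$, exploit Lemma \ref{lemma1} as a contraction of $\mathbf{I}-\mathbf{G}$ on $PW_{\omega}(\mathcal{G})$, and telescope against a clean bound on the noise-induced forcing term. Since each iterate $\tilde{\mathbf{f}}^{(k)}$ is produced by $\mathcal{P}_{\omega}$ in \eqref{ILMR:init} and \eqref{ILMR:loop}, and $\mathbf{f}\in PW_{\omega}(\mathcal{G})$, we have $\mathbf{e}^{(k)}\in PW_{\omega}(\mathcal{G})$ for every $k$. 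Substituting $\tilde{\mathbf{f}}=\mathbf{f}+\mathbf{n}$ into \eqref{ILMR:loop} and using linearity of the operator $\mathbf{G}$ from Definition \ref{limitedpropagation} turns the iteration into a perturbed affine recursion
$$\mathbf{e}^{(k+1)} = (\mathbf{I}-\mathbf{G})\mathbf{e}^{(k)} + \mathbf{G}\mathbf{n},$$
i.e.\ the noiseless contraction driven by the constant forcing $\mathbf{G}\mathbf{n}$.

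Next I would apply Lemma \ref{lemma1} to $\mathbf{e}^{(k)}$ to get $\|(\mathbf{I}-\mathbf{G})\mathbf{e}^{(k)}\|\le\gamma\|\mathbf{e}^{(k)}\|$ with $\gamma=C_{\max}\sqrt{\omega}<1$, and combine with the triangle inequality to obtain the scalar recursion $\|\mathbf{e}^{(k+1)}\|\le\gamma\|\mathbf{e}^{(k)}\|+\|\mathbf{G}\mathbf{n}\|$. The crucial estimate $\|\mathbf{G}\mathbf{n}\|\le\tilde{n}$ is then proved by non-expansiveness of $\mathcal{P}_{\omega}$ and the disjoint supports of $\{\bm{\delta}_{\mathcal{N}_i}\}_{i\in\mathcal{I}}$:
$$\|\mathbf{G}\mathbf{n}\|^2 \le \Big\|\sum_{i\in\mathcal{I}} n_i\bm{\delta}_{\mathcal{N}_i}\Big\|^2 = \sum_{i\in\mathcal{I}}|\mathcal{N}_i|\,n_i^2,$$
followed by the elementary inequality $\sqrt{\sum_i a_i^2}\le\sum_i|a_i|$ with $a_i=\sqrt{|\mathcal{N}_i|}\,|n_i|$, which recovers the precise quantity $\tilde{n}$ in \eqref{tilden}. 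Iterating the scalar recursion yields $\|\mathbf{e}^{(k)}\|\le\gamma^{k}\|\mathbf{e}^{(0)}\|+\tilde{n}/(1-\gamma)$; it then remains to control $\|\mathbf{e}^{(0)}\|=\|\mathbf{G}\tilde{\mathbf{f}}-\mathbf{f}\|$. Splitting this as $-(\mathbf{I}-\mathbf{G})\mathbf{f}+\mathbf{G}\mathbf{n}$, applying Lemma \ref{lemma1} to $\mathbf{f}\in PW_\omega(\mathcal{G})$, and reusing $\|\mathbf{G}\mathbf{n}\|\le\tilde{n}$ gives $\|\mathbf{e}^{(0)}\|\le\gamma\|\mathbf{f}\|+\tilde{n}$. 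Plugging back produces an upper bound dominated by the stated right-hand side of \eqref{tfk-f} (using $\gamma^{k+1}\|\mathbf{n}\|\ge 0$ to match the symmetric form).

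The main obstacle is the estimate $\|\mathbf{G}\mathbf{n}\|\le\tilde{n}$: the chain of non-expansiveness of $\mathcal{P}_\omega$, the Pythagorean identity from disjoint local sets, and the $\ell_2$-to-$\ell_1$ conversion must be assembled in precisely the right order to recover $\sum_i\sqrt{|\mathcal{N}_i|}\,|n_i|$ rather than a coarser $\|\mathbf{n}\|$-type bound. This is the step through which the local-set cardinalities $|\mathcal{N}_i|$ and the equivalent-noise amplitudes $|n_i|=|\langle\mathbf{n},\bm{\varphi}_i\rangle|$ enter multiplicatively into the error, motivating the subsequent optimization of the partition $\{\mathcal{N}_i\}$ and the local weights $\{\bm{\varphi}_i\}$.
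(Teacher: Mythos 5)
Your argument is correct in substance but takes a genuinely different route from the paper. The paper inverts $\mathbf{G}$ on $PW_{\omega}(\mathcal{G})$ via the Neumann series, represents the reconstruction limit as $\sum_{i}\langle\cdot,\bm{\varphi}_i\rangle\mathbf{e}_i$ with atoms $\mathbf{e}_i=\sum_{j\ge0}(\mathbf{I}-\mathbf{G})^j\mathcal{P}_{\omega}(\bm{\delta}_{\mathcal{N}_i})$ satisfying $\|\mathbf{e}_i\|\le\sqrt{|\mathcal{N}_i|}/(1-\gamma)$, and reads off \eqref{tfk-f} from the closed form $\tilde{\mathbf{f}}^{(k)}=\tilde{\mathbf{f}}-(\mathbf{I}-\mathbf{G})^{k+1}\tilde{\mathbf{f}}$; the quantity $\tilde{n}$ enters through the norms of the atoms. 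You instead work with the perturbed one-step recursion $\mathbf{e}^{(k+1)}=(\mathbf{I}-\mathbf{G})\mathbf{e}^{(k)}+\mathbf{G}\mathbf{n}$ and telescope, with $\tilde{n}$ entering through $\|\mathbf{G}\mathbf{n}\|\le\tilde{n}$; that estimate (projection non-expansiveness, Pythagoras over disjoint supports, then $\sqrt{\sum_i a_i^2}\le\sum_i|a_i|$) is assembled correctly. A side benefit of your route is that every vector to which Lemma \ref{lemma1} is applied ($\mathbf{e}^{(k)}$, $\mathbf{f}$, and the forcing term) genuinely lies in $PW_{\omega}(\mathcal{G})$, whereas the paper's proof applies the contraction bound to the non-bandlimited $\tilde{\mathbf{f}}$.

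One step needs repair. Plugging $\|\mathbf{e}^{(0)}\|\le\gamma\|\mathbf{f}\|+\tilde{n}$ into $\|\mathbf{e}^{(k)}\|\le\gamma^{k}\|\mathbf{e}^{(0)}\|+\tilde{n}\sum_{j=0}^{k-1}\gamma^{j}$ leaves a stray term $\gamma^{k}\tilde{n}$, and this is \emph{not} dominated by $\gamma^{k+1}\|\mathbf{n}\|$ in general: $\tilde{n}$ can exceed $\|\mathbf{n}\|$ (only $\tilde{n}\le\sqrt{N}\|\mathbf{n}\|$ holds generically), so invoking ``$\gamma^{k+1}\|\mathbf{n}\|\ge0$'' does not close the argument as written. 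The fix is immediate: fold the stray term into the geometric sum,
\begin{equation*}
\gamma^{k}\tilde{n}+\tilde{n}\sum_{j=0}^{k-1}\gamma^{j}=\tilde{n}\sum_{j=0}^{k}\gamma^{j}\le\frac{\tilde{n}}{1-\gamma},
\end{equation*}
which gives $\|\tilde{\mathbf{f}}^{(k)}-\mathbf{f}\|\le\tilde{n}/(1-\gamma)+\gamma^{k+1}\|\mathbf{f}\|$ --- in fact slightly sharper than \eqref{tfk-f}, since the $\gamma^{k+1}\|\mathbf{n}\|$ term is not needed. With that adjustment your proof is complete.
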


The proof of Proposition \ref{pro2} is postponed to section \ref{proof2}.

From (\ref{tfk-f}) it can be seen that in the noise scenario the reconstruction error in controlled by the sum of two parts. The former one is a weighted sum of the equivalent noise of all the local sets, while the latter one is decaying with the increase of iteration number.
The former part is crucial as the iteration goes on. Thus minimizing the former part, which is determined by both partition of centerless local sets and local weights, may improve the performance of ILMR in the noise scenario.

\subsection{Gaussian Noise and Optimal Local Weights}\label{subsecoptimalweight}
For a given partition $\{\mathcal{N}_i\}_{i\in\mathcal{I}}$, some prior knowledge of unknown noise ${\bf n}$ may bring the possibility to design optimal local weights. For simplicity the noises associated with different vertices are assumed to be independent.

Suppose the noise follows zero-mean Gaussian distribution, i.e., ${\bf n}\sim \mathcal{N}({\bf 0},{\bf \Sigma})$, where ${\bf \Sigma}$ is a diagonal matrix and the noise of vertex $v$ satisfies $n(v)\sim\mathcal{N}(0,\sigma^2(v))$.
Then $\tilde{n}$ defined in (\ref{tilden}) is a random variable.

For centerless local set $\mathcal{N}_i$, according to (\ref{ni}), the equivalent noise $n_i$ also follows a Gaussian distribution
$n_i\sim\mathcal{N}(0,\sigma_i^2)$,
where
\begin{equation}\label{sigmai}
\sigma_i^2=\sum_{v\in\mathcal{N}_i}\sigma^2(v)\varphi_i^2(v).
\end{equation}
Then $|n_i|$ follows the half-normal distribution with its expectation satisfying
$$
{\rm E}\left\{|n_i|\right\}=\sigma_i\sqrt{\frac{2}{\pi}}.
$$

According to (\ref{tilden}), the expectation of $\tilde{n}$ is
\begin{equation}\label{Etilden}
{\rm E}\{\tilde{n}\}=\sqrt{\frac{2}{\pi}}\sum_{i\in\mathcal{I}}\sqrt{|\mathcal{N}_i|}\sigma_i.
\end{equation}
Then the following corollary is ready to obtain.

\begin{cor}\label{cor1}
For given centerless local sets and the associated weights $\{(\mathcal{N}_i, \bm{\varphi}_i)\}_{i\in\mathcal{I}}$, the original signal ${\mathbf f}\in PW_{\omega}(\mathcal{G})$, assuming the noise associated with vertex $v$ follows independent Gaussian distribution $\mathcal{N}(0,\sigma^2(v))$,
if $\omega$ is less than $1/C_{\rm max}^2$, the expected reconstruction error of ILMR in the $k$th iteration satisfies
\begin{equation}\label{Etfk-f}
{\rm E}\left\{\|\tilde{\bf f}^{(k)}-{\bf f}\|\right\}
\le\frac{1}{1-\gamma}\sqrt{\frac{2}{\pi}}\sum_{i\in\mathcal{I}}\sqrt{|\mathcal{N}_i|}\sigma_i
+\mathcal{O}\left(\gamma^{k+1}\right),
\end{equation}
where $\gamma$ is defined as (\ref{defgamma}), and $\sigma_i$ is defined as (\ref{sigmai}).
\end{cor}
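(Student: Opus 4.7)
The plan is to deduce Corollary \ref{cor1} as a straightforward consequence of Proposition \ref{pro2} by taking expectations on both sides of the deterministic bound \eqref{tfk-f}. Since the bound in Proposition \ref{pro2} holds almost surely for every realization of the noise $\mathbf{n}$, applying $\mathrm{E}\{\cdot\}$ to both sides and using monotonicity/linearity of expectation gives
\[
\mathrm{E}\bigl\{\|\tilde{\bf f}^{(k)}-{\bf f}\|\bigr\}\le \frac{\mathrm{E}\{\tilde{n}\}}{1-\gamma}+\gamma^{k+1}\bigl(\|{\bf f}\|+\mathrm{E}\{\|{\bf n}\|\}\bigr).
\]
So the task reduces to evaluating (or at least bounding) the two noise-dependent expectations on the right.

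First I would handle $\mathrm{E}\{\tilde{n}\}$, which is the term that actually survives as $k\to\infty$ and therefore deserves a precise expression rather than an order bound. From \eqref{ni}, each equivalent noise $n_i=\sum_{v\in\mathcal{N}_i}n(v)\varphi_i(v)$ is a linear combination of independent zero-mean Gaussians, hence $n_i\sim\mathcal{N}(0,\sigma_i^2)$ with variance given by \eqref{sigmai}. Therefore $|n_i|$ is half-normal, and its mean is $\sigma_i\sqrt{2/\pi}$, a standard fact about the folded normal. Substituting into the definition \eqref{tilden} and using linearity of expectation gives $\mathrm{E}\{\tilde{n}\}=\sqrt{2/\pi}\sum_{i\in\mathcal{I}}\sqrt{|\mathcal{N}_i|}\,\sigma_i$, exactly equation \eqref{Etilden} already derived in the text. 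Dividing by $1-\gamma$ matches the first term on the right-hand side of \eqref{Etfk-f}.

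Next I would absorb the remaining $\gamma^{k+1}(\|{\bf f}\|+\mathrm{E}\{\|{\bf n}\|\})$ into the $\mathcal{O}(\gamma^{k+1})$ remainder. Here $\|{\bf f}\|$ is a fixed deterministic constant, and the only point to check is that $\mathrm{E}\{\|{\bf n}\|\}$ is finite. By Jensen's inequality applied to the concave square root,
\[
\mathrm{E}\{\|{\bf n}\|\}\le \sqrt{\mathrm{E}\{\|{\bf n}\|^2\}}=\sqrt{\sum_{v\in\mathcal{V}}\sigma^2(v)},
\]
which is a constant independent of $k$. Hence $\gamma^{k+1}\bigl(\|{\bf f}\|+\mathrm{E}\{\|{\bf n}\|\}\bigr)=\mathcal{O}(\gamma^{k+1})$, and combining this with the previous step yields \eqref{Etfk-f}.

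I do not anticipate a genuine obstacle here; the corollary is essentially a repackaging of Proposition \ref{pro2} under the added Gaussian assumption. The only mildly delicate point is being explicit that the transient term involving $\|{\bf n}\|$ remains of order $\gamma^{k+1}$ in expectation, which is dispatched by the Jensen bound above. Everything else is a direct substitution of the half-normal mean formula into the already-established deterministic inequality.
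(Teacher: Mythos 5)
Your proposal is correct and follows essentially the same route as the paper, which simply takes the expectation of the bound in Proposition \ref{pro2} and substitutes the half-normal mean formula \eqref{Etilden} for ${\rm E}\{\tilde{n}\}$. Your additional Jensen-inequality check that ${\rm E}\{\|{\bf n}\|\}$ is finite is a small detail the paper leaves implicit, but it does not change the argument.
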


Corollary \ref{cor1} is ready to prove by plugging (\ref{sigmai}) and (\ref{Etilden}) in the expectation of (\ref{tfk-f}).

By minimizing the right hand side of (\ref{Etfk-f}), the optimal choice of local weights\footnote{In fact, the optimal local weights can also be studied in other criterions, e.g. the fastest convergence. However, in this work we only consider in the sense of minimizing the expected reconstruction error bound.} can be derived.

\begin{cor}\label{optweights}
For given division of centerless local sets $\{\mathcal{N}_i\}_{i\in\mathcal{I}}$, if the noises associated with the vertices are independent and follow zero-mean Gaussian distributions $n(v)\sim\mathcal{N}(0,\sigma^2(v))$, then the optimal local weights $\{\bm{\varphi}_i\}_{i\in\mathcal{I}}$ are
\begin{equation}\label{optimalweight}
\varphi_i(v)=
\begin{cases}
\displaystyle{\frac{(\sigma^2(v))^{-1}}{\sum_{v\in\mathcal{N}_i}(\sigma^2(v))^{-1}}}, & v\in \mathcal{N}_i;\\
0, & v\notin \mathcal{N}_i.
\end{cases}
\end{equation}
\end{cor}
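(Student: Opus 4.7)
The plan is to observe that the right-hand side of \eqref{Etfk-f} depends on the local weights only through the sum $\sum_{i\in\mathcal{I}}\sqrt{|\mathcal{N}_i|}\,\sigma_i$, where $\sigma_i^2 = \sum_{v\in\mathcal{N}_i}\sigma^2(v)\varphi_i^2(v)$. Since the centerless local sets $\{\mathcal{N}_i\}_{i\in\mathcal{I}}$ are disjoint (Definition \ref{deflocalset2}) and each $\bm{\varphi}_i$ is supported on $\mathcal{N}_i$ with the independent normalization constraint $\sum_{v\in\mathcal{N}_i}\varphi_i(v)=1$, the cardinality factor $\sqrt{|\mathcal{N}_i|}$ is a positive constant for each $i$ and the outer sum decouples across $i$. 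Hence minimizing the bound reduces to independently minimizing $\sigma_i^2$ for every $i\in\mathcal{I}$ subject to nonnegativity and the unit-sum constraint.

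Next I would solve the per-cluster problem: minimize $\sum_{v\in\mathcal{N}_i}\sigma^2(v)\varphi_i^2(v)$ over nonnegative $\{\varphi_i(v)\}_{v\in\mathcal{N}_i}$ summing to one. Two routes work cleanly. The Lagrangian $L=\sum_{v\in\mathcal{N}_i}\sigma^2(v)\varphi_i^2(v)-\lambda\bigl(\sum_{v\in\mathcal{N}_i}\varphi_i(v)-1\bigr)$ is strictly convex in $\bm{\varphi}_i$, so its unique stationary point gives the global minimizer; setting $\partial L/\partial \varphi_i(v)=0$ yields $\varphi_i(v)=\lambda/(2\sigma^2(v))$, and enforcing the constraint fixes $\lambda/2 = 1/\sum_{v\in\mathcal{N}_i}\sigma^{-2}(v)$, which gives exactly \eqref{optimalweight}. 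Alternatively, Cauchy--Schwarz applied to $1=\sum_{v}\bigl(\sigma(v)\varphi_i(v)\bigr)\bigl(1/\sigma(v)\bigr)$ gives $\sigma_i^2\ge 1/\sum_v \sigma^{-2}(v)$, with equality iff $\sigma(v)\varphi_i(v)\propto 1/\sigma(v)$, i.e., $\varphi_i(v)\propto \sigma^{-2}(v)$.

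I would then verify that the critical weights in \eqref{optimalweight} satisfy the nonnegativity requirement, which is immediate since each $\sigma^2(v)>0$, and note that the normalization constraint is built into the denominator. Combining the per-$i$ minimizers yields the global minimizer of the error bound in Corollary \ref{cor1}, completing the proof.

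No serious obstacle arises: the separability across $i$ is the only conceptual point, after which the problem becomes a textbook weighted least-squares / inverse-variance weighting calculation. The main thing to be careful about is to emphasize that only the \emph{upper bound} in \eqref{Etfk-f} (not the true expected error) is being optimized, and that the first term $\frac{1}{1-\gamma}\sqrt{2/\pi}\sum_i \sqrt{|\mathcal{N}_i|}\,\sigma_i$ is the only $\bm{\varphi}_i$-dependent quantity, since $\gamma=C_{\max}\sqrt{\omega}$ and the $\mathcal{O}(\gamma^{k+1})$ term depend only on the partition, not on the weights.
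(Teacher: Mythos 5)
Your proposal is correct and follows essentially the same route as the paper: reduce the bound minimization to independently minimizing each $\sigma_i^2$ subject to the unit-sum constraint, then apply Cauchy--Schwarz (the paper's exact step) to get $\sigma_i^2\ge 1/\sum_{v\in\mathcal{N}_i}(\sigma^2(v))^{-1}$ with equality at the inverse-variance weights. The additional Lagrangian derivation and the explicit remarks on separability and nonnegativity are fine but not a different argument.
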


\begin{proof}
Minimizing the right hand side of (\ref{Etfk-f}) is equivalent to minimizing $\sigma_i$ for each local set $\mathcal{N}_i$.
By the Cauchy-Schwarz inequality, one has
$$
\left(\sum_{v\in\mathcal{N}_i}(\sigma^2(v))^{-1}\!\right)\sigma_i^2
=\left(\sum_{v\in\mathcal{N}_i}(\sigma^2(v))^{-1}\!\right)\!\!\left(\sum_{v\in\mathcal{N}_i}\sigma^2(v)\varphi_i^2(v)\!\right)
\ge \left(\sum_{v\in\mathcal{N}_i}\varphi_i(v)\!\right)^2=1.
$$
Therefore,
\begin{equation}\label{sigmaimin}
\sigma_i^2\ge \frac{1}{\sum_{v\in\mathcal{N}_i}(\sigma^2(v))^{-1}}.
\end{equation}
The equality of (\ref{sigmaimin}) holds if and only if (\ref{optimalweight}) is satisfied.
\end{proof}

The above analysis shows that in the sense of minimizing the expected reconstruction error, the optimal local weight associated with vertex $v$ within $\mathcal{N}_i$ is inversely proportional to the noise variance of $v$. This is evident because more information are reserved in the sampling process if a larger local weight is assigned to a vertex with smaller noise variance. However, it should be noted that compared with the optimal local measurement, assigning all the weights in $\mathcal{N}_i$ to the vertex with the smallest noise variance, i.e. the optimal decimation, is not the best choice.
In fact, the optimal choice of local measurements is consistent with the well-known inverse variance weighting in statistics \cite{lipsey_practical_2001}.

Therefore, local measurement may reduce the disturbance of noise and reconstruct the original signal more precisely. In other words, for given partition of centerless local sets, graph signal reconstruction from local measurements with the optimal weights may always perform better than reconstruction from decimation, even when the vertices with the smallest noise variance are chosen in the latter sampling scheme.

\subsection{A Special Case of Independent and Identical Distributed Gaussian Noise}

Specifically, if the noise variances are the same for all the vertices, i.e., $\sigma(v)$ equals $\sigma$ for any $v\in\mathcal{V}$, $\tilde{n}$ can be approximately written in a more explicit form. For $\mathcal{N}_i$, the optimal local weight is equal for all the vertices in $\mathcal{N}_i$. Thus $\varphi_i(v)$ equals $1/|\mathcal{N}_i|$ for $v\in\mathcal{N}_i$, and in this case, $\sqrt{|\mathcal{N}_i|}n_i$ follows a Gaussian distribution,
$$
\sqrt{|\mathcal{N}_i|}n_i\sim\mathcal{N}(0,\sigma^2).
$$
Then $\sqrt{|\mathcal{N}_i|}\cdot|n_i|$ follows the half-normal distribution with the same parameter $\sigma$.
The above analysis shows that each term of the sum in (\ref{tilden}) follows independent and identical half-normal distribution, with its expectation and variance satisfying
\begin{align*}
{\rm E}\left\{\sqrt{|\mathcal{N}_i|}\cdot|n_i|\right\}&=\sigma\sqrt{\frac{2}{\pi}},\\
\text{Var}\left\{\sqrt{|\mathcal{N}_i|}\cdot|n_i|\right\}&=\sigma^2\left(1-\frac{2}{\pi}\right).
\end{align*}
Because the number of local sets $|\mathcal{I}|$ is always relatively large, by the central limit theorem,
$\tilde{n}$ follows a Gaussian distribution approximately,
$$
\tilde{n}\sim\mathcal{N}\left(|\mathcal{I}|\sigma\sqrt{\frac{2}{\pi}}, |\mathcal{I}|\sigma^2\left(1-\frac{2}{\pi}\right)\right).
$$

Then we have the following corollary.
\begin{cor}\label{cor2}
For given centerless local sets $\{\mathcal{N}_i\}_{i\in\mathcal{I}}$ and the associated weights $\varphi_i(v)=1/|\mathcal{N}_i|$ for $v\in\mathcal{N}_i$, the original signal ${\mathbf f}\in PW_{\omega}(\mathcal{G})$, assuming the noise associated with each vertex follows \emph{i.i.d} Gaussian distribution $\mathcal{N}(0,\sigma^2)$,
if $\omega$ is less than $1/C_{\rm max}^2$, the expected reconstruction error of ILMR in the $k$th iteration satisfies
\begin{align}\label{Etfk-f2}
{\rm E}\left\{\|\tilde{\bf f}^{(k)}-{\bf f}\|\right\}
\le\frac{|\mathcal{I}|\sigma}{1-\gamma}\sqrt{\frac{2}{\pi}}
+\mathcal{O}\left(\gamma^{k+1}\right),
\end{align}
where $\gamma$ is defined as (\ref{defgamma}).
\end{cor}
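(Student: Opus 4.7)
The plan is to specialize Proposition \ref{pro2} under the hypotheses of the corollary: uniform weights $\varphi_i(v)=1/|\mathcal{N}_i|$ and i.i.d.\ noise $n(v)\sim\mathcal{N}(0,\sigma^2)$. The starting point is the deterministic inequality
$$
\|\tilde{\mathbf f}^{(k)}-{\mathbf f}\|\le\frac{\tilde n}{1-\gamma}+\gamma^{k+1}\bigl(\|{\mathbf f}\|+\|{\mathbf n}\|\bigr),
$$
which holds sample-wise because $\omega<1/C_{\rm max}^2$. Taking expectations of both sides and using monotonicity of $\mathrm E$ immediately gives
$$
\mathrm E\bigl\{\|\tilde{\mathbf f}^{(k)}-{\mathbf f}\|\bigr\}\le\frac{\mathrm E\{\tilde n\}}{1-\gamma}+\gamma^{k+1}\bigl(\|{\mathbf f}\|+\mathrm E\{\|{\mathbf n}\|\}\bigr),
$$
so the task reduces to evaluating $\mathrm E\{\tilde n\}$ and showing the remaining terms sit inside the $\mathcal O(\gamma^{k+1})$ remainder.

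The core computation is the distribution of $\tilde n$. With uniform weights, $n_i=\frac{1}{|\mathcal N_i|}\sum_{v\in\mathcal N_i}n(v)$ is a linear combination of $|\mathcal N_i|$ independent $\mathcal N(0,\sigma^2)$ variates, so $n_i\sim\mathcal N(0,\sigma^2/|\mathcal N_i|)$ and therefore $\sqrt{|\mathcal N_i|}\,n_i\sim\mathcal N(0,\sigma^2)$. Consequently $\sqrt{|\mathcal N_i|}\,|n_i|$ is half-normal with mean $\sigma\sqrt{2/\pi}$. Summing the definition $\tilde n=\sum_{i\in\mathcal I}\sqrt{|\mathcal N_i|}\,|n_i|$ over $i$ by linearity of expectation then yields $\mathrm E\{\tilde n\}=|\mathcal I|\sigma\sqrt{2/\pi}$, which is exactly the leading constant appearing in (\ref{Etfk-f2}). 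Note that this piece of the argument is essentially already carried out in the paragraphs preceding the corollary, so I would simply cite (\ref{Etilden}) specialized to uniform weights and $\sigma_i=\sigma/\sqrt{|\mathcal N_i|}$.

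For the transient term, the quantity $\|{\mathbf f}\|$ is a fixed constant and $\mathrm E\{\|{\mathbf n}\|\}\le\sqrt{\mathrm E\{\|{\mathbf n}\|^2\}}=\sigma\sqrt{N}$ by Jensen's inequality, so $\|{\mathbf f}\|+\mathrm E\{\|{\mathbf n}\|\}$ is a $k$-independent constant and the entire second summand is $\mathcal O(\gamma^{k+1})$. Combining the two parts gives the stated bound. There is no real obstacle here; the only subtlety worth flagging is that the $\mathcal O(\cdot)$ notation hides a constant depending on $\|{\mathbf f}\|$, $\sigma$, and $N$ but not on the iteration index $k$, which is the intended meaning. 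The proof is therefore essentially a bookkeeping step on top of Proposition \ref{pro2} and Corollary \ref{cor1}, with the only probabilistic content being the already-derived half-normal mean identity.
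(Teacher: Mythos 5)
Your proof is correct and follows essentially the same route as the paper: specialize Corollary \ref{cor1} (equivalently, take expectations in Proposition \ref{pro2}) with $\sigma_i=\sigma/\sqrt{|\mathcal{N}_i|}$ so that $\sqrt{|\mathcal{N}_i|}\,|n_i|$ is half-normal with mean $\sigma\sqrt{2/\pi}$ and $\mathrm{E}\{\tilde n\}=|\mathcal{I}|\sigma\sqrt{2/\pi}$. Your explicit bound $\mathrm{E}\{\|{\bf n}\|\}\le\sigma\sqrt{N}$ for the transient term is a small but welcome clarification of what the $\mathcal{O}(\gamma^{k+1})$ constant depends on, which the paper leaves implicit.
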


According to (\ref{Etfk-f2}), the error bound is affected by the number of centerless local sets $|\mathcal{I}|$. A division with fewer sets may reduce the expected reconstruction error.
However, it should be noted that the number of centerless local sets cannot be too small to satisfy the condition
$$
\gamma=C_{\rm max}\sqrt{\omega} = \max_{i\in\mathcal{I}}\sqrt{|\mathcal{N}_i|D_i\omega}<1,
$$
which is determined by the cutoff frequency of the original graph signal.
Besides, the factor $1/(1-\gamma)$ in (\ref{Etfk-f2}) implies that a smaller $C_{\rm max}$, which leads to a smaller $\gamma$, also reduces the error bound.
A roughly calculation can be given to balance the two factors.
If there are not too many vertices in each $\mathcal{N}_i$, we have that $C_{\rm max}$ approximates to $N_{\rm max}$, where $N_{\rm max}$ is the largest cardinality of centerless local sets. Since $N_{\rm max}|\mathcal{I}|$ approximates to $N$, we have
$$
\frac{1}{1-\gamma}|\mathcal{I}|\approx \frac{1}{1-\sqrt{\omega}N_{\rm max}}\cdot\frac{N}{N_{\rm max}}.
$$
To minimize the above quantity, a near optimal $N_{\rm max}$ is
\begin{equation}\label{Nmaxapp}
N_{\rm max} = \frac{1}{2\sqrt{\omega}},
\end{equation}
i.e., $\gamma$ approximates to $1/2$.
It provides a strategy to partition centerless local sets. For given cutoff frequency $\omega$, an approximated $N_{\rm max}$ can be chosen according to (\ref{Nmaxapp}), then the graph is divided into local sets to make sure that $|\mathcal{N}_i|$ is not more than $N_{\rm max}$ and the number of local sets is as small as possible.

For a given $N_{\rm max}$, a greedy algorithm is proposed to make the division of centerless local sets, as shown in Table \ref{algLocalSet}. The greedy algorithm is to iteratively remove connected vertices with the smallest degrees from the original graph into the new set, until the cardinality of the new set reaches $N_{\rm max}$ or there is no connected vertex. The reason for choosing the smallest-degree vertex is that such a vertex is more likely on the border of a graph.

\begin{table}[t]
\renewcommand{\arraystretch}{1.2}
\caption{A greedy method to partition centerless local sets with maximal cardinality.}\label{algLocalSet}
\begin{center}
\begin{tabular}{l}
\toprule[1pt]
{\bf Input:} \hspace{0.5em} Graph $\mathcal{G(V,E)}$, Maximal cardinality $N_{\text{max}}$;\\
{\bf Output:} \hspace{0.5em} Centerless local sets $\{\mathcal{N}_i\}_{i\in\mathcal{I}}$;\\
\hline
{\bf Initialization:}\hspace{0.5em} $i=0$;\\
{\bf Loop Until:} $\mathcal{V}=\emptyset$\\
\hspace{1.3em} 1) Find one vertex with the smallest degree in $\mathcal{G}$,\\
\hspace{3.9em} $\displaystyle u=\arg\min_{v\in \mathcal{V}}d_{\mathcal{G}}(v)$;\\
\hspace{1.3em} 2) $i=i+1$, $\mathcal{N}_i=\{u\}$;\\
\hspace{1.3em} 3) Obtain the neighbor set of $\mathcal{N}_i$, \\
\hspace{3.9em} $\mathcal{S}_i=\{v\in\mathcal{G}|v\sim w, w\in\mathcal{N}_i, v\notin\mathcal{N}_i\}$;\\
\hspace{1.3em} {\bf Loop Until:} $|\mathcal{N}_i|=N_{\text{max}}$ or $\mathcal{S}_i=\emptyset$\\
\hspace{2.6em} 4) Find one vertex with the smallest degree in $\mathcal{S}_i$,\\
\hspace{5.2em} $\displaystyle u=\arg\min_{v\in \mathcal{S}_i}d_{\mathcal{G}}(v)$;\\
\hspace{2.6em} 5) $\mathcal{N}_i=\mathcal{N}_i\cup\{u\}$;\\
\hspace{2.6em} 6) Update $\mathcal{S}_i=\{v\in\mathcal{G}|v\sim w, w\in\mathcal{N}_i, v\notin\mathcal{N}_i\}$;\\
\hspace{1.3em} {\bf End Loop}\\
\hspace{1.3em} 7) Remove the edges, $\displaystyle \mathcal{E}=\mathcal{E}\backslash\{(p,q)|p\in\mathcal{N}_i,q\in\mathcal{V}\}$;\\
\hspace{1.3em} 8) Remove the vertices, $\displaystyle \mathcal{V}=\mathcal{V}\backslash\mathcal{N}_i$ and $\mathcal{G=G(V,E)}$;\\
{\bf End Loop}\\
\bottomrule[1pt]
\end{tabular}
\end{center}
\end{table}


\section{Experiments}

We choose the Minnesota road graph \cite{gleich_matlabbgl}, which has $2640$ vertices and $6604$ edges, to verify the proposed generalized sampling scheme and reconstruction algorithm. The bandlimited signals for reconstruction are  generated by removing the high-frequency component of random signals, whose entries are drawn from \emph{i.i.d.} Gaussian distribution. The centerless local sets are generated by the greedy method in Table \ref{algLocalSet} using given $N_{\rm max}$. Five kinds of local weights are tested including
\begin{enumerate}
\item
uniform weight, where $\varphi_i(v)$ equals $1/|\mathcal{N}_i|, \forall v\in\mathcal{N}_i$;
\item
random weight, where
$$\varphi_i(v) = \frac{\varphi^\prime_i(v)}{\sum_{u\in \mathcal{N}_i}\varphi^\prime_i(u)}, \quad \forall v\in\mathcal{N}_i, \varphi^\prime_i(u)\sim\mathcal U(0,1);$$
\item
Dirac delta weight, where ${\bm \varphi}_i$ equals ${\bm \delta}_u$ for a randomly chosen $u\in\mathcal{N}_i$;
\item
the optimal weight, where
$$
\varphi_i(v) = \frac{(\sigma^2(v))^{-1}}{\sum_{v\in\mathcal{N}_i}(\sigma^2(v))^{-1}}, \quad \forall v\in \mathcal{N}_i;
$$
\item
the optimal Dirac delta weight, where ${\bm \varphi}_i$ equals ${\bm \delta}_u$ for
$$u=\arg\min_{u\in\mathcal{N}_i}\sigma^2(u).
$$
\end{enumerate}
Notice that case 3) and case 5) degenerate ILMR to IPR.

\subsection{Convergence of ILMR}

In the first experiment, the convergence of the proposed ILMR is verified for various centerless local sets partition and local weights. The graph is divided into $709$ and $358$ centerless local sets for $N_{\rm max}$ equals $4$ and $8$, respectively. Three kinds of local weights are tested including case 1), 2), and 3). The averaged convergence curves are plotted in Fig. \ref{exp1} for $100$ randomly generated original graph signals. According to Fig. \ref{exp1}, the convergence is accelerated when the graph is divided into more local sets and has a smaller $N_{\rm max}$. It is ready to understand because more local sets will bring more measurements and increase the sampling rate, which provides more information in the reconstruction. According to (\ref{defgamma}), for the same $\omega$, a smaller $N_{\rm max}$ leads to a smaller $\gamma$, and guarantees a faster convergence. The experimental result also shows that in the noise-free scenario, reconstruction with uniform weight converges slightly faster than that with random weight. However, both above cases converge much faster than reconstruction with Dirac delta weight. This means that local-measurement-based ILMR behaves better than decimation-based IPR by combining the signals on different vertices properly.

\begin{figure}[t]
\begin{center}
\includegraphics[width=9cm]{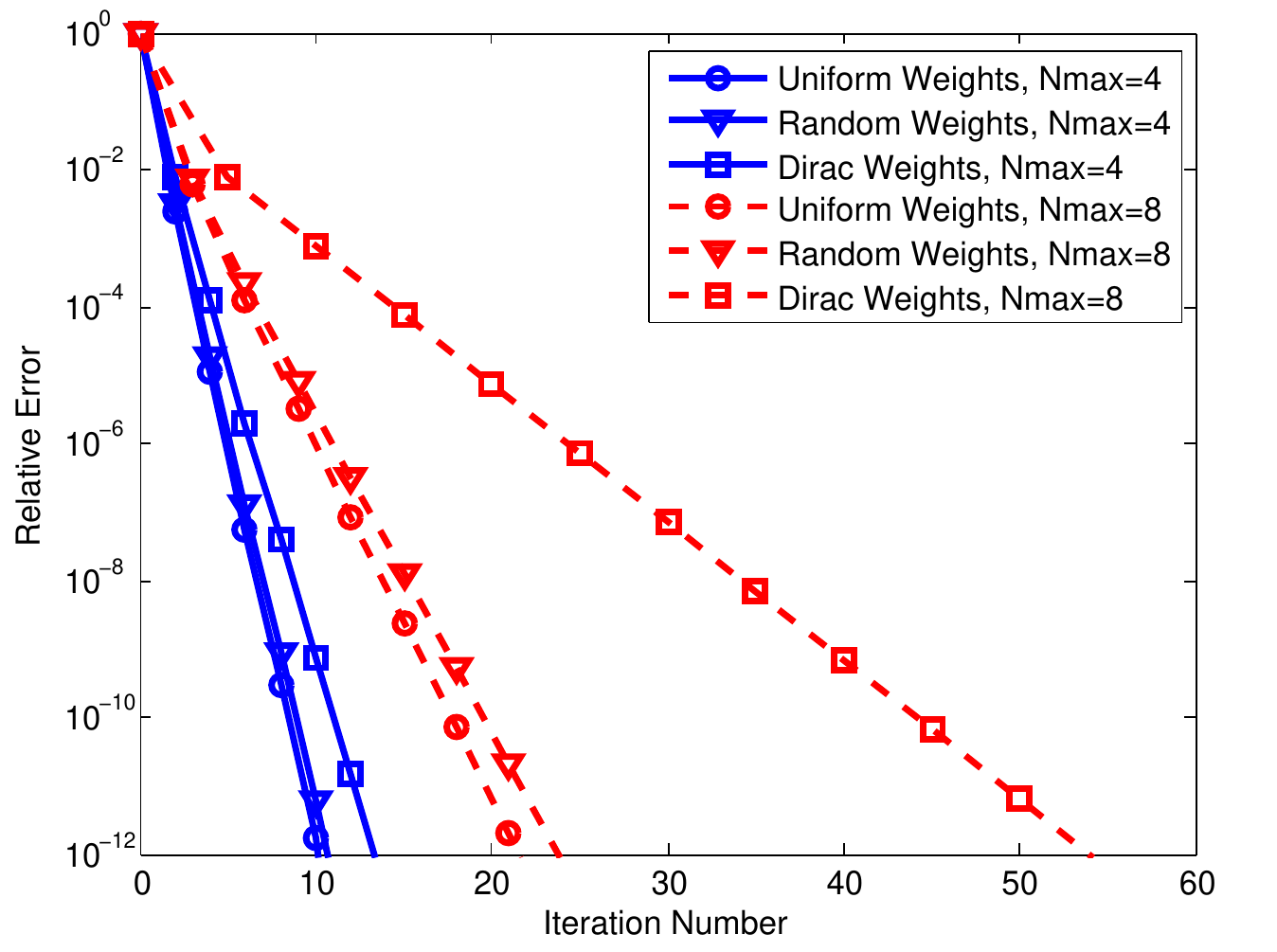}
\caption{The convergence behavior of ILMR for various division of centerless local sets and different local weights.}
\label{exp1}
\end{center}
\end{figure}

\subsection{Optimal Local Weights for Gaussian Noise}
\begin{figure}[t]
\begin{center}
\includegraphics[width=9cm]{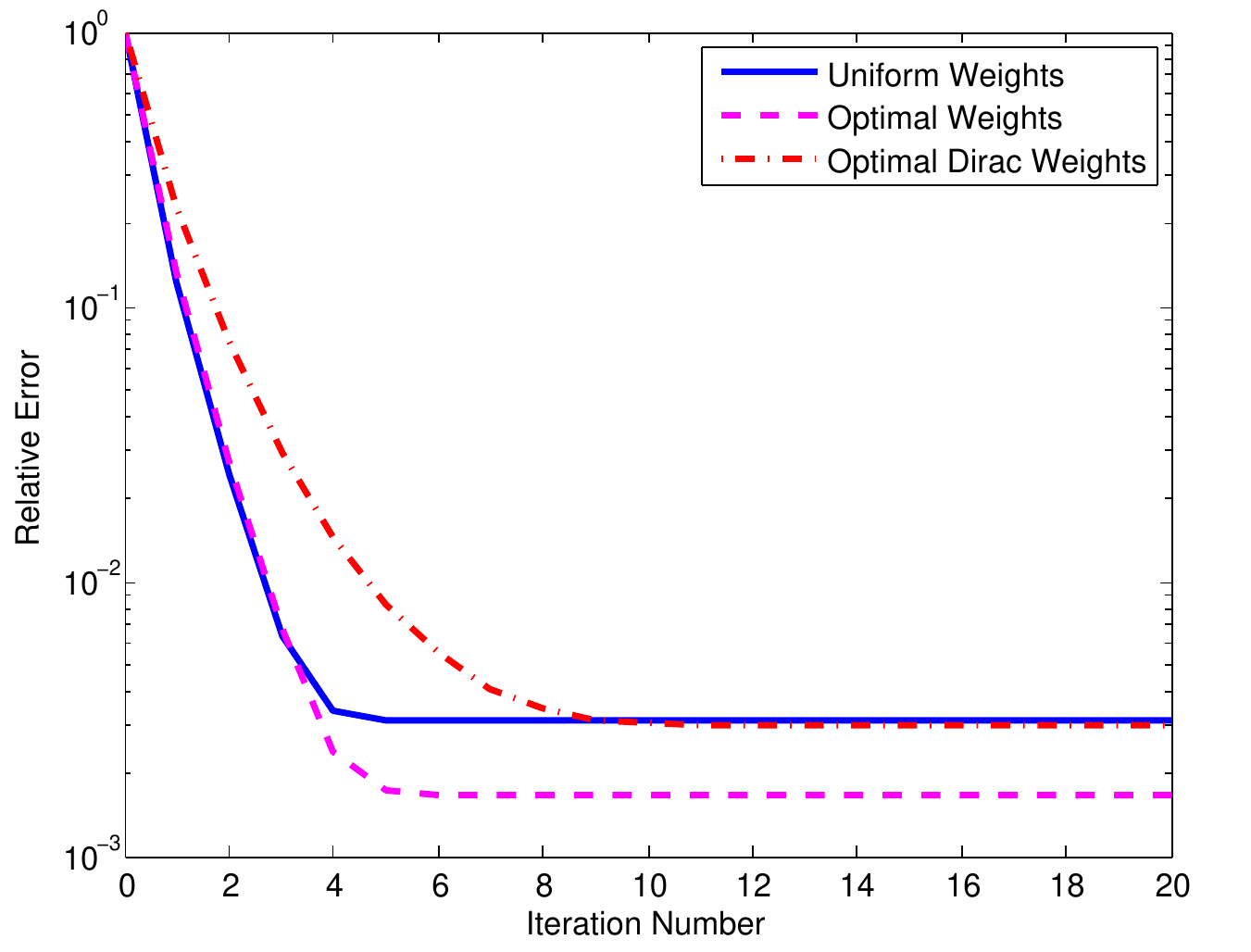}
\caption{The convergence curves of reconstruction with uniform weights, the optimal weights, and optimal Dirac delta weights when independent zero-mean Gaussian noise is added to each vertex. }
\label{exp3}
\end{center}
\end{figure}

In this experiment, independent zero-mean Gaussian noise is added to each vertex with different variance.
The original signal is normalized with unit norm. All of the vertices are randomly divided into three groups with the standard deviations of the noise chosen as $\sigma$ equals $1\times 10^{-4}$, $2\times 10^{-4}$, and $5\times 10^{-4}$, respectively. The graph is partitioned into $358$ centerless local sets with $N_{\rm max}$ equals $8$. Three kinds of local weights are tested including case 1), 4), and 5). The averaged convergence curves are illustrated in Fig. \ref{exp3} for $100$ randomly generated original graph signals. One may read that the steady-state relative error with the optimal weight is smaller than those with uniform weight and the optimal Dirac delta weight. The experimental result verifies the analysis in section \ref{subsecoptimalweight}. It implies that a better selection of local weights can reduce the reconstruction error if the noise variances on vertices are different.

\subsection{Performance against Independent and Identical Distributed Gaussian Noise}

\begin{figure}[t]
\begin{center}
\includegraphics[width=9cm]{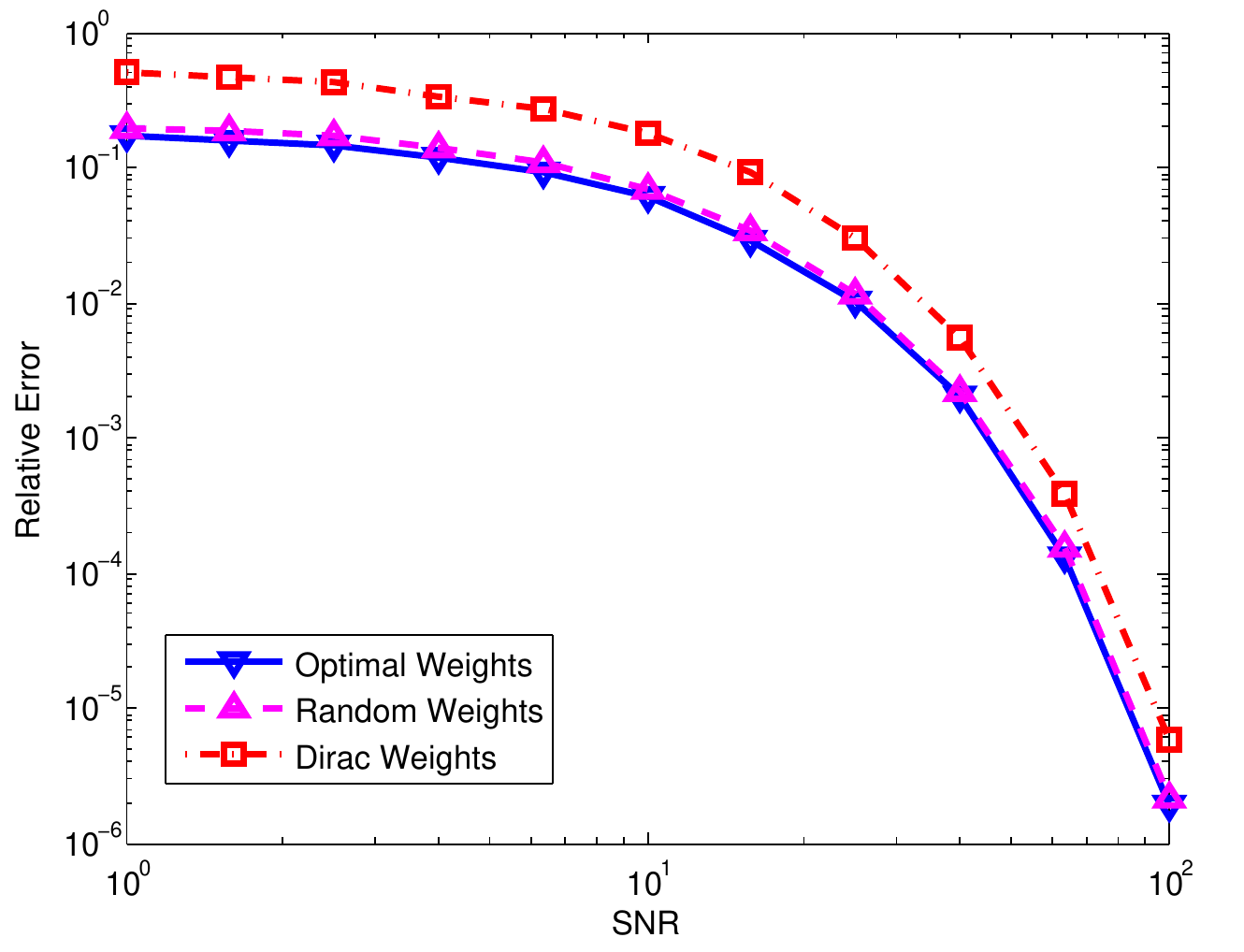}
\caption{Relative errors of ILMR under difference SNRs with various choices of local weights.
The noise associated with each vertex is \emph{i.i.d.} Gaussian.}
\label{exp2}
\end{center}
\end{figure}

In this experiment, the performance of the proposed algorithm against \emph{i.i.d.} Gaussian noise are tested for three kinds of local weights including case 1), 2), and 3). In this case the optimal local weights is equivalent to uniform weights. The graph is partitioned into $358$ centerless local sets with $N_{\rm max}$ equals $8$. The relative reconstruction errors of three tests are illustrated in Fig. \ref{exp2}. Each point is the average of $100$ trials.
The experimental result shows that for \emph{i.i.d.} Gaussian noise, reconstruction with uniform weight or random weight performs beyond that with Dirac delta weight, which is actually the traditional sampling scheme of decimation. It shows that compared with decimation, the proposed generalized sampling scheme is more robust against noise, as analyzed in section \ref{secnoise}.

\subsection{Reconstruction of Approximated Bandlimited Signals}
\begin{figure}[t]
\begin{center}
\includegraphics[width=9cm]{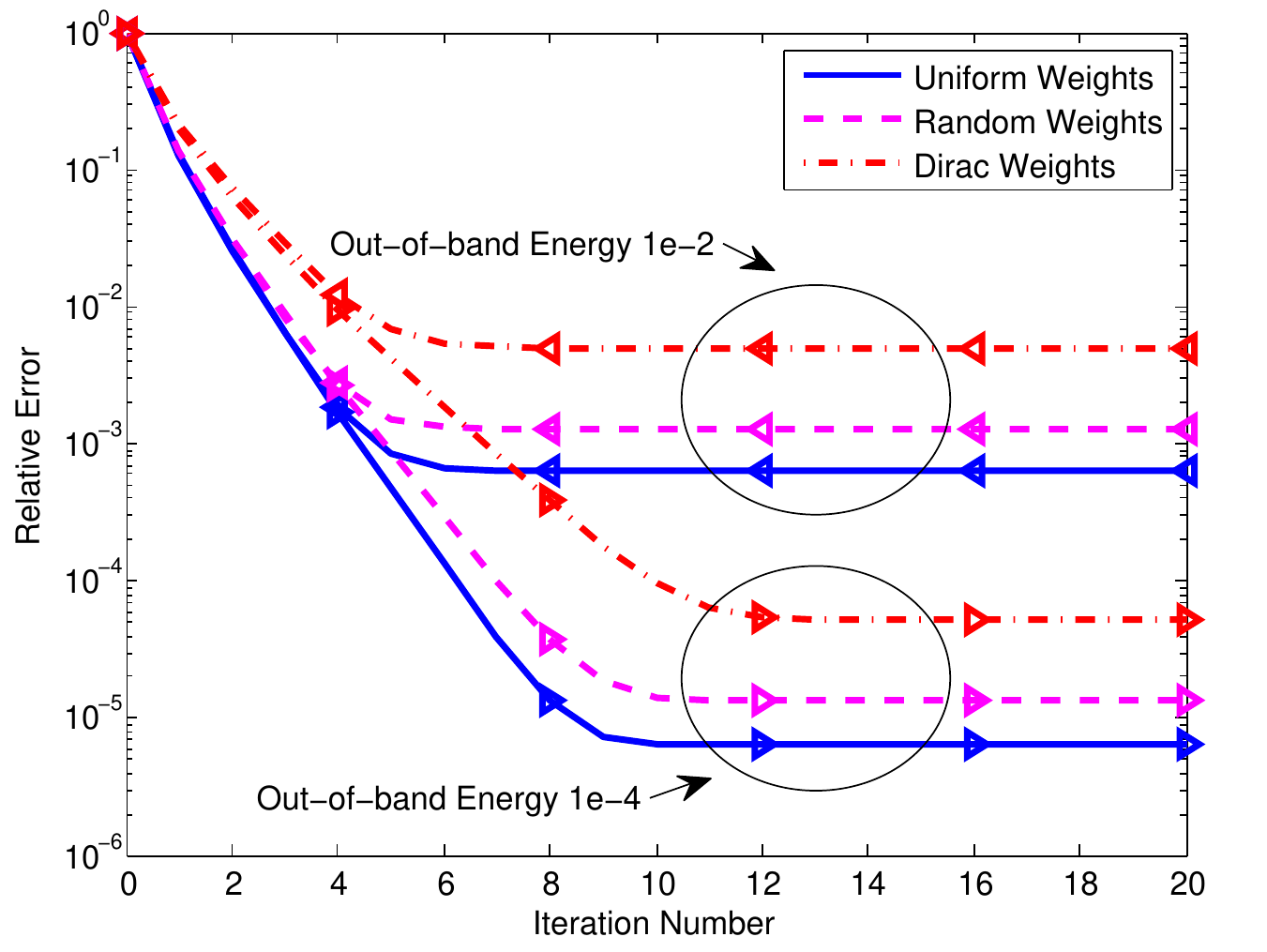}
\caption{The convergence curves for uniform weights, random weights, and Dirac delta weights if the original graph signals are approximated bandlimited.}
\label{exp4}
\end{center}
\end{figure}

In this experiment, approximated bandlimited signals are tested to be reconstructed by ILMR.
The original signal is normalized to have norm $1$ and the out-of-band energy is $10^{-2}$ or $10^{-4}$. The graph is partitioned into $358$ centerless local sets and the maximal cardinality of local sets is $8$. Three kinds of local weights are tested including case 1), 2), and 3). The convergence curves are shown in Fig. \ref{exp4}, where each curve is the average of $100$ trials. It is natural to see that the steady-state error is larger for a larger out-of-band energy. Besides, the case with uniform local weights has a smaller relative error, much better than that with Dirac weights. In other words, reconstruction from local measurements performs beyond reconstruction from decimation if the original signals are not strictly bandlimited.

\section{Conclusion}

In this paper, a sampling scheme named local measurement is proposed to obtain sampled data from graph signals, which is a generalization of graph signal decimation.
Using the local measurements, a reconstruction algorithm ILMR is proposed to perfectly reconstruct original bandlimited signals iteratively. The convergence of ILMR is proved and its performance in noise scenario is analyzed.
The optimal local weights are given to minimize the effect of noise, and a greedy algorithm for local sets partition is proposed.
Theoretical analysis and experimental results demonstrate that the local measurement sampling scheme together with reconstruction method is more robust against additive noise.

\section{Appendix}
\subsection{Proof of Lemma \ref{lemma1}}
\label{proof1}

By the definition of ${\bf G}$, and considering that $\{\mathcal{N}_i\}_{i\in\mathcal{I}}$ are disjoint, one has
\begin{align}\label{lem1-1}
\|{\mathbf f}-{\bf G}{\mathbf f}\|^2
=&\left\|P_{\omega}\left(\sum_{i\in \mathcal{I}}\left({\mathbf f}_{\mathcal{N}_i}-\langle {\mathbf f}, \bm{\varphi}_i\rangle\bm{\delta}_{\mathcal{N}_i}\right)\right)\right\|^2\nonumber\\
\le&\left\|\sum_{i\in \mathcal{I}}\left({\mathbf f}_{\mathcal{N}_i}-\langle {\mathbf f}, \bm{\varphi}_i\rangle\bm{\delta}_{\mathcal{N}_i}\right)\right\|^2\nonumber\\
=&\sum_{i\in \mathcal{I}}\left\|{\mathbf f}_{\mathcal{N}_i}-\langle {\mathbf f}, \bm{\varphi}_i\rangle\bm{\delta}_{\mathcal{N}_i}\right\|^2,
\end{align}
where
$$
f_{\mathcal{N}_i}(v)=
\begin{cases}
f(v), & v\in \mathcal{N}_i;\\
0, & v\notin \mathcal{N}_i.
\end{cases}
$$
For $i\in\mathcal{I}$, one has
\begin{align}\label{lem1-4}
\|{\mathbf f}_{\mathcal{N}_i}-\langle {\mathbf f}, \bm{\varphi}_i\rangle\bm{\delta}_{\mathcal{N}_i}\|^2\nonumber
=&\sum_{v\in \mathcal{N}_i}|f(v)-\langle {\mathbf f}, \bm{\varphi}_i\rangle|^2\nonumber\\
=&\sum_{v\in \mathcal{N}_i}\left|\sum_{p\in \mathcal{N}_i}\varphi_i(p)\left(f(v)-f(p)\right)\right|^2\nonumber\\
\le&\sum_{v\in \mathcal{N}_i}\max_{p\in \mathcal{N}_i}|f(v)-f(p)|^2
\end{align}

Denote
$$p_i(v)=\arg \max_{p\in \mathcal{N}_i}|f(v)-f(p)|^2.$$
Since $\mathcal{N}_i$ is connected, there is a shortest path within $\mathcal{N}_i$ from $v$ to $p_i(v)$, which is denoted as $v\sim v_1\sim \cdots \sim v_{k_v} \sim p_i(v)$, and the length of this path is not longer than $D_i$.
Then for $v\in \mathcal{N}_i$, one has
\begin{align}
\max_{p\in \mathcal{N}_i}|f(v)-f(p)|^2
=|f(v)-f(p_i(v))|^2
\le &\left(|f(v)-f(v_1)|+\cdots +|f(v_{k_v})-f(p_i(v))|\right)^2 \nonumber\\
\le &D_i\left(|f(v)-f(v_1)|^2+\cdots +|f(v_{k_v})-f(p_i(v))|^2\right).\nonumber
\end{align}
Therefore, one has
\begin{equation}\label{lem1-2}
\sum_{v\in \mathcal{N}_i}\max_{p\in \mathcal{N}_i}|f(v)-f(p)|^2
\le|\mathcal{N}_i|D_i\!\!\!\sum_{p\sim q; p,q\in \mathcal{N}_i}|f(p)-f(q)|^2,
\end{equation}
where $p\sim q$ denotes there is an edge between $p$ and $q$.
Inequality (\ref{lem1-2}) holds because each edge within $\mathcal{N}_i$ is reused for no more than $|\mathcal{N}_i|$ times. To study the right hand side of (\ref{lem1-2}), one has
\begin{align}
\sum_{p\sim q}|f(p)-f(q)|^2
=&{\mathbf f}^{\rm T}{\bf L}{\mathbf f}={\mathbf f}^{\rm T}{\bf U\Lambda U}^{\rm T}{\mathbf f}=\hat{{\mathbf f}}^{\rm T}{\bf \Lambda} \hat{{\mathbf f}}\nonumber\\
=&\sum_{\lambda_i\le\omega}\lambda_i |\hat{f}(i)|^2\le \omega\hat{{\mathbf f}}^{\rm T}\hat{{\mathbf f}}=\omega\|{\mathbf f}\|^2,\label{eqasdf}
\end{align}
where $\bf L, U$, and $\bf \Lambda$ denote the Laplacian, its eigenvectors, and its eigenvalues, respectively.
The last inequality in \eqref{eqasdf} is because the entries of spectrum $\hat{{\mathbf f}}={\bf U}^{\rm T}{\mathbf f}$ corresponding to the frequencies higher than $\omega$ are zero for ${\mathbf f}\in PW_{\omega}(\mathcal{G})$.

Consequently, utilizing \eqref{lem1-4}, \eqref{lem1-2}, and \eqref{eqasdf} in \eqref{lem1-1}, we have
\begin{align}
\|{\mathbf f}-{\bf G}{\mathbf f}\|^2\le&\sum_{i\in \mathcal{I}}\left(|\mathcal{N}_i|D_i\sum_{p\sim q; p,q\in \mathcal{N}_i}|f(p)-f(q)|^2\right)\nonumber\\
\le&C_{\rm max}^2\sum_{p\sim q}|f(p)-f(q)|^2\nonumber\\
\le&\omega C_{\rm max}^2\|{\mathbf f}\|^2\nonumber
\end{align}
and Lemma \ref{lemma1} is proved.

\subsection{Proof of Proposition \ref{pro2}}
\label{proof2}
According to Lemma \ref{lemma1}, we have $\|{\bf I-G}\|\le \gamma<1$ for $PW_{\omega}(\mathcal{G})$ when $\gamma=C_{\rm max}\sqrt{\omega}<1$.
Then ${\bf G}$ is invertible and $1-\gamma\le \|{\bf G}\|\le 1+\gamma$ for $PW_{\omega}(\mathcal{G})$.
The inverse of ${\bf G}$ is
$$
{\bf G}^{-1}=\sum_{j=0}^{\infty}({\bf I-G})^j.
$$

According to (\ref{localmeasprop2}), ${\bf f}$ can be written as
\begin{equation}\label{fe}
{\bf f}={\bf G}^{-1}{\bf Gf}
=\sum_{j=0}^{\infty}({\bf I-G})^j\sum_{i\in \mathcal{I}}\langle {\mathbf f}, \bm{\varphi}_i\rangle\mathcal{P}_{\omega}(\bm{\delta}_{\mathcal{N}_i})
=\sum_{i\in \mathcal{I}}\langle {\mathbf f}, \bm{\varphi}_i\rangle {\bf e}_i,
\end{equation}
where
$$
{\bf e}_i=\sum_{j=0}^{\infty}({\bf I-G})^j\mathcal{P}_{\omega}(\bm{\delta}_{\mathcal{N}_i}).
$$
Similarly, one has
\begin{align}
\tilde{\bf f}=\sum_{i\in \mathcal{I}}\langle \tilde{\mathbf f}, \bm{\varphi}_i\rangle {\bf e}_i.\nonumber
\end{align}

Using (\ref{iter}) and ${\bf f}^{(0)}={\bf Gf}$, we have
$$
{\bf f}^{(k)}={\bf f}+({\bf I-G})^{k}({\bf f}^{(0)}-{\bf f})={\bf f}-({\bf I-G})^{k+1}{\bf f}.
$$
Therefore
\begin{align}\label{fke}
\tilde{\bf f}^{(k)}=\tilde{\bf f}-({\bf I-G})^{k+1}\tilde{\bf f}
=\sum_{i\in \mathcal{I}}\langle \tilde{\mathbf f}, \bm{\varphi}_i\rangle {\bf e}_i
-({\bf I-G})^{k+1}\tilde{\bf f}.
\end{align}

If $\gamma=C_{\rm max}\sqrt{\omega}<1$, ${\bf e}_i$ satisfies
\begin{align}\label{norme}
\|{\bf e}_i\|\le\sum_{j=0}^{\infty}\gamma^j\|\mathcal{P}_{\omega}(\bm{\delta}_{\mathcal{N}_i})\|
\le\frac{1}{1-\gamma}\|\bm{\delta}_{\mathcal{N}_i}\|=\frac{1}{1-\gamma}\sqrt{|\mathcal{N}_i|}.
\end{align}
According to (\ref{fe}), (\ref{fke}), and (\ref{norme}),
\begin{align}
\|\tilde{\bf f}^{(k)}-{\bf f}\|
&=\left\|\sum_{i\in \mathcal{I}}\langle \tilde{\mathbf f}-{\bf f}, \bm{\varphi}_i\rangle {\bf e}_i
-({\bf I-G})^{k+1}\tilde{\bf f}\right\|\nonumber\\
&\le\sum_{i\in \mathcal{I}}|\langle {\bf n}, \bm{\varphi}_i\rangle| \left\|{\bf e}_i\right\|+\gamma^{k+1}\|\tilde{\mathbf f}\|\nonumber\\
&\le\frac{1}{1-\gamma}\sum_{i\in \mathcal{I}}\sqrt{|\mathcal{N}_i|}\cdot|n_i| +\gamma^{k+1}\left(\|{\mathbf f}\|+\|{\bf n}\|\right).\nonumber
\end{align}
Then Proposition \ref{pro2} is proved.

\footnotesize

\end{document}